\newtheorem{theorem}{Theorem}
\begin{document}
% \SetAlgoLined
% \SetAlgoCaptionSeparator{:}
% \renewcommand{\algocf@ruledwidth}{1.1\textwidth} % 调整为文本宽度的1.1倍
%
% paper title
% Titles are generally capitalized except for words such as a, an, and, as,
% at, but, by, for, in, nor, of, on, or, the, to and up, which are usually
% not capitalized unless they are the first or last word of the title.
% Linebreaks \\ can be used within to get better formatting as desired.
% Do not put math or special symbols in the title.
\title{{\textsc{ SecNeuron:}} Reliable and Flexible Abuse Control in Local LLMs via Hybrid Neuron Encryption}

% author names and affiliations
% use a multiple column layout for up to three different
% affiliations
% \author{Anonymous Author(s)}
\author{
\IEEEauthorblockN{
Zhiqiang Wang\IEEEauthorrefmark{1},
Haohua Du\IEEEauthorrefmark{2},
Junyang Wang\IEEEauthorrefmark{1},
Haifeng Sun\IEEEauthorrefmark{1}, 
Kaiwen Guo\IEEEauthorrefmark{3},
Haikuo Yu\IEEEauthorrefmark{1},
Chao Liu\IEEEauthorrefmark{3},
Xiang-Yang Li\IEEEauthorrefmark{1}
}
\IEEEauthorblockA{
\IEEEauthorrefmark{1}University of Science and Technology of China \\
Email: \{sa21221041, iswangjy, sun1998, yhk7786\}@mail.ustc.edu.cn, xiangyangli@ustc.edu.cn \\
\IEEEauthorrefmark{2}Beihang University \\
Email: duhaohua@buaa.edu.cn \\
\IEEEauthorrefmark{3}Ocean University of China \\
Email: \{kevinguo, liuchao\}@ouc.edu.cn
}
}

% make the title area
\maketitle

% As a general rule, do not put math, special symbols or citations
% in the abstract
 \begin{abstract}
 % a task-grained
% and flexible security mechanism
\deleted{An increasing number of users have chosen to locally deploy large language models (LLMs), which raises significant security risks associated with potential abuse, such as harmful content generation or intellectual property leakage. Existing methods primarily target cloud-based LLM deployments and face several challenges when applied to local deployments, such as vulnerability to circumvention (malicious I/O filtering), uncertain effectiveness (distillation or safety alignment) and high costs (unlearning).}
\added{Large language models (LLMs) with diverse capabilities are increasingly being deployed in local environments, presenting significant security and controllability challenges. These locally deployed LLMs operate outside the direct control of developers, rendering them more susceptible to abuse. Existing mitigation techniques mainly designed for cloud-based LLM services are frequently circumvented or ineffective in deployer-controlled environments.}

We propose \textsc{SecNeuron}, \deleted{an innovative approach}\added{the first framework that seamlessly embeds classic access control within the intrinsic capabilities of LLMs}, achieving reliable, cost-effective, flexible, and certified abuse control for local deployed LLMs.
\textsc{SecNeuron} employs neuron-level encryption and selective decryption to dynamically control the task-specific capabilities of LLMs, limiting unauthorized task abuse without compromising others. 
% through neuron-level encryption and selective decryption. By selectively decrypting and pruning relevant neurons, \textsc{SecNeuron} can dynamically control the model's intrinsic task capability, limiting unauthorized tasks without compromising others.
% thus addressing three major challenges: 1) how to reasonably select a subset of neurons for each task from logically related neurons; 2) how to efficiently encrypt/decrypt millions of neurons; and 3) how to distinguish between abuse control and performance degradation resulting from inherent model flaws.
We first design a task-specific neuron extraction mechanism to decouple logically related neurons and construct a \deleted{cross-linguistic alignment}\added{layered} policy tree for handling coupled neurons. 
We then introduce a flexible and efficient hybrid encryption framework for millions of neurons in LLMs.
Finally, we developed a distribution-based decrypted neuron detection mechanism on ciphertext to ensure the effectiveness of partially decrypted LLMs.
We proved that \textsc{SecNeuron} satisfies IND-CPA Security and Collusion Resistance Security under the \deleted{Neuron Isolation Principle}\added{Task Controllability Principle}. Experiments on various task settings show that \textsc{SecNeuron} limits unauthorized task accuracy to below 25\% while keeping authorized accuracy loss with 2\%. Using an unauthorized Code task example, the accuracy of abuse-related malicious code generation was reduced from 59\% to 15\%. \textsc{SecNeuron} also mitigates unauthorized data leakage, reducing PII extraction rates to below 5\% and membership inference to random guesses. Additionally, \textsc{SecNeuron} enables one-time encryption \& transmission and multi-party selective decryption, requiring millisecond-level key generation and byte-level key exchange for local LLM capability adjustment.
\end{abstract}

% no keywords

% For peer review papers, you can put extra information on the cover
% page as needed:
% \ifCLASSOPTIONpeerreview
% \begin{center} \bfseries EDICS Category: 3-BBND \end{center}
% \fi
%
% For peerreview papers, this IEEEtran command inserts a page break and
% creates the second title. It will be ignored for other modes.
\IEEEpeerreviewmaketitle
\begin{figure}[h]
  \centering
  \includegraphics[width=\linewidth]{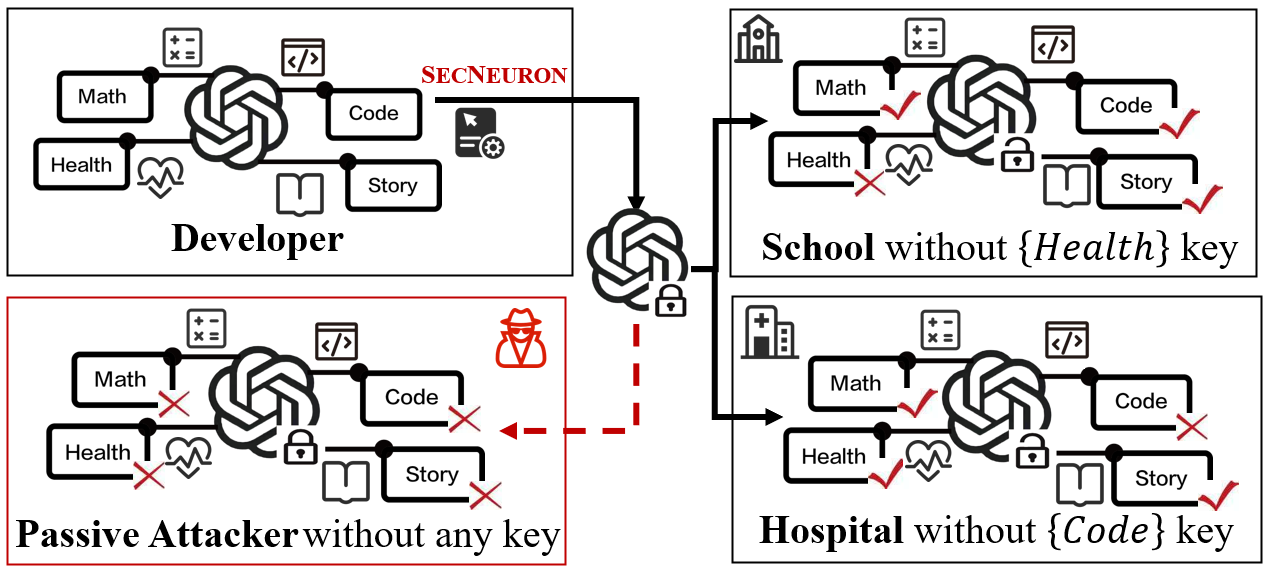}
  \caption{Workflow of \textsc{SecNeuron}. Developer encrypts their LLM once (\textit{One-time Encryption}).
  Different deployers download the same encrypted LLM and dynamically decrypt authorized tasks while restricting unauthorized capabilities to mitigate abuse. (\textit{Multi-party Selective Decryption}). The developer maintained and released one single encrypted LLM, while deployers also download it once even if permissions change (\textit{One-time Transmission}).}
  % 部署方使用SecNeuron加密大模型在分发之前，不同的部署方下载加密的大模型，基于自身属性和权限获取授权密钥，访问大模型授权功能（对号），同时限制大模型未授权任务的能力（错号）。加密过程只需要执行一次，开发方只需要维护发布一个加密的LLM版本，部署方只需要下载一次完成大模型，不同的部署方基于加密模型选择性解密授权任务（单次加密、单次传输、多方选择性解密）。
  % \Description{Workflow of TLAC. Model developers can use TLAC to implement task-level access control for their model before releasing it. TLAC needs to be executed only once, where a single encrypted model is generated and released. Model Deployers with different keys can decrypt specific model parameters to access corresponding task functionalities.}
  \label{fig-workflow}
  \vspace{-12pt}
\end{figure}
\section{Introduction}
Local deployment of LLMs is an increasingly popular paradigm, chosen by many organizations and individuals for private or domain-specific application~\cite{singhal2025toward,hau2025llms,On-Premise-adv,kumar2024beyond}. \deleted{However, the rapidly advancing capabilities of LLMs have enabled various AI services while simultaneously giving rise to abuse-related issues.}
\added{However, LLMs incorporate increasingly diverse task capabilities that exceed the usage requirements of specific deployers, thereby increasing abuse-related risks in local deployment scenarios.}
% As large language models (LLMs) are increasingly adopted for private, domain-specific applications, distributing LLMs to third-party deployers (users) for local deployment has become a common paradigm~\cite{singhal2025toward,hau2025llms,On-Premise-adv,kumar2024beyond}. 
For instance, students can use LLMs to generate assignments or even academic misconduct articles~\cite{ELALI2023100706}, while malicious users can exploit models like GPT to write harmful code and phishing emails~\cite{wiggers2023chatgpt}.

\begin{table*}[tbhp]
\caption{Comparison of Potential Security Mechanisms for Mitigating Abuse of Local LLMs}
% 潜在的安全机制对比for缓解本地部署大模型的滥用
\resizebox{\linewidth}{!}{
\begin{tabular}{@{}cccccccccc@{}}
\toprule
\multirow{2}{*}{\textbf{Method}} & \multirow{2}{*}{\textbf{Limit UnAuth. Task}} & \multicolumn{3}{c}{\textbf{Reliable}} & \multicolumn{2}{c}{\textbf{Flexibile}} & \multirow{2}{*}{\textbf{Data Protect}} \\ 
\cmidrule(lr){3-5} \cmidrule(lr){6-7}
                                 &                                              & \textbf{Limit UnAuth. Capability}  & \textbf{Robust$^*$} & \textbf{Intrinsic$^\text{\ding{72}}$} & \textbf{Customized} & \textbf{Dynamic$^+$} &                                        \\ \midrule
Distillation/Fine-tunning~\cite{rozière2024codellamaopenfoundation,isarth_distill_gpt2_story_generator}& \Circle & \Circle & \Circle  &  \CIRCLE   & \LEFTcircle$^{\dagger}$ & \Circle & \Circle \\
Safety Alignment~\cite{ouyang2022traininglanguagemodelsfollow,10.5555/3600270.3602281,10.5555/3294996.3295184,foster2023fastmachineunlearningretraining}& \CIRCLE & \Circle & \Circle & \CIRCLE & \Circle & \Circle & \Circle \\
Unlearning~\cite{pochinkov2024dissectinglanguagemodelsmachine,9519428,foster2023fastmachineunlearningretraining}& \CIRCLE & \CIRCLE & \CIRCLE & \CIRCLE & \LEFTcircle$^{\dagger}$ & \Circle & \CIRCLE \\
Watermarking~\cite{zhang2024watermarking,liang2024watermarking,xu2025mark,ye2025periodic}& \Circle & \Circle & \Circle & \Circle & \Circle & \Circle & \Circle \\
Malicious I/O Detection~\cite{cao-etal-2024-defending,xie-etal-2024-gradsafe,zhang-etal-2024-defending}& \CIRCLE & \Circle & \Circle & \Circle & \CIRCLE & \CIRCLE & \Circle \\ 
\textbf{\textsc{SecNeuron}}& \CIRCLE & \CIRCLE & \CIRCLE & \CIRCLE & \CIRCLE & \CIRCLE & \CIRCLE \\ 
\bottomrule
\end{tabular}
}
\begin{tablenotes}
        \small
        % \item 1. \CIRCLE: Yes; \Circle: No; \LEFTcircle: Partial, which means a significant gap exists between the current state and the desired goal.
        % \item 1. TEE and \textit{All-or-Nothing Access Control} do not support task-level protection and thus are not presented in the comparison.
        \item 1. \textbf{Limit UnAuth. Task} refers to restricting LLMs from completing unauthorized tasks; \textbf{Limit UnAuth. Capability} directly limits the model's underlying capabilities, making it inherently unable to perform unauthorized tasks (low performance), and is more robust and reliable.
        \item 2. $^*$: robustness against malicious prompt for abuse, like jailbreak or prompt injection. $^\text{\ding{72}}$: is extremely important for local deployment, as temporary security mechanisms can be easily bypassed or removed locally. $^{\dagger}$: requires significant overhead to fine-tune or maintain multiple versions of LLMs, which is impractical. $^+$: tasks of local LLMs can be dynamically adjusted with minimal overhead.
        % \item 2. 

        % Persistent是非常及其重要的for local部署，as 临时的安全机制可以被轻易移除
        % Customization 需要花费大量开销微调LLM或维护传输多个不同的LLM版本，是不现实的
      \end{tablenotes}
\label{tab-compare}
\vspace{-8pt}
\end{table*}

Preventing the abuse of LLM has gradually become a focal point of both societal and academic concern. Existing solutions can primarily be categorized into two aspects: 1) temporary defences that constrain user interactions, such as malicious input/output (I/O) filter~\cite{cao-etal-2024-defending, xie-etal-2024-gradsafe, zhang-etal-2024-defending} or \deleted{access control}\added{safety system prompt} and 2) adjust the LLMs to refuse inappropriate queries or tailor for specific tasks, such as safety alignment~\cite{ouyang2022traininglanguagemodelsfollow,10.5555/3600270.3602281,10.5555/3294996.3295184,foster2023fastmachineunlearningretraining}or task-specific fine-tuning/distillation~\cite{rozière2024codellamaopenfoundation,isarth_distill_gpt2_story_generator,Selvaraju_2018_ECCV,liu2018understanding,song2024does}. These works predominantly address scenarios where the LLM is deployed in the cloud, meaning that the \textbf{runtime environment is controlled by the model developer and the abuse behaviours are pre-known} (e.g., according to human values). 
Consequently, their effectiveness relies on two assumptions: 1) constraints on user interactions should be honestly enforced, and 2) sufficient resources and capabilities to adjust the LLM according to explicitly defined abuse behaviours. These assumptions make it challenging to apply these methods to local deployments.
% However, this paradigm introduces new security vulnerabilities. Malicious users gain greater freedom to abuse LLMs for unauthorized tasks or even extract high-value/sensitive training data (Personal Identification Information (PII) extraction~\cite{lukas2023analyzing}, Membership Inference Attack (MIA)~\cite{puerto2024scalingmembershipinferenceattacks}), which threatens the intellectual property of developers~\cite{kirchhubel2024intellectual,jiang2024intellectual,aipp} and potentially posing significant threats to social security. For example, local GPT allows any users without any authorization for coding or email knowledge to create malicious code and emails ~\cite{wiggers2023chatgpt} or will be abused to generate academic misconduct articles~\cite{ELALI2023100706}.
% 然而本地部署not only 威胁 intellectual property of 

Local LLMs are deployed on private clouds or local PCs, where \textbf{the runtime environment is controlled by the model deployer (users who might be malicious).} Moreover, deployment requirements may be customized \deleted{(e.g., deployers may only access a subset of the model's capabilities, such as a hospital not requiring the model’s code generation abilities)}and dynamic (e.g., the function of a local LLM in a family may vary depending on the intended user - certain capabilities should be restricted when children access the model). \deleted{Therefore, abuse in local deployments not only involves the pre-known generation of malicious content by LLMs but also includes the unauthorized use of model functionalities by malicious users (deployers).}\added{Therefore, abuse in local deployments extends beyond the pre-known generation of malicious content to include using any unauthorized model functionalities by malicious deployers.} \added{For instance, a hospital might require only the medical diagnosis capabilities of LLMs, while needing to strictly limit its code generation ability.} It is evident that the two aforementioned assumptions are difficult to meet in this context: 1) deployer-controlled environments cannot impose constraints on deployer behaviour, and 2) adjusting the model for various deployment requirements is prohibitively expensive. 
In summary, the expansion of abuse definition and changes in deployment scenarios make existing methods challenging to adapt, as shown in Table~\ref{tab-compare}. Notably, while Trusted Execution Environments (TEEs)~\cite{10.1145/3386901.3388946,PhalaNetwork2024} provide strong security protection for local LLMs, they focus primarily on the confidentiality and integrity of critical parameters during the running time and cannot prevent abuses.
\begin{figure}[t]
  \centering
  \includegraphics[width=\linewidth]{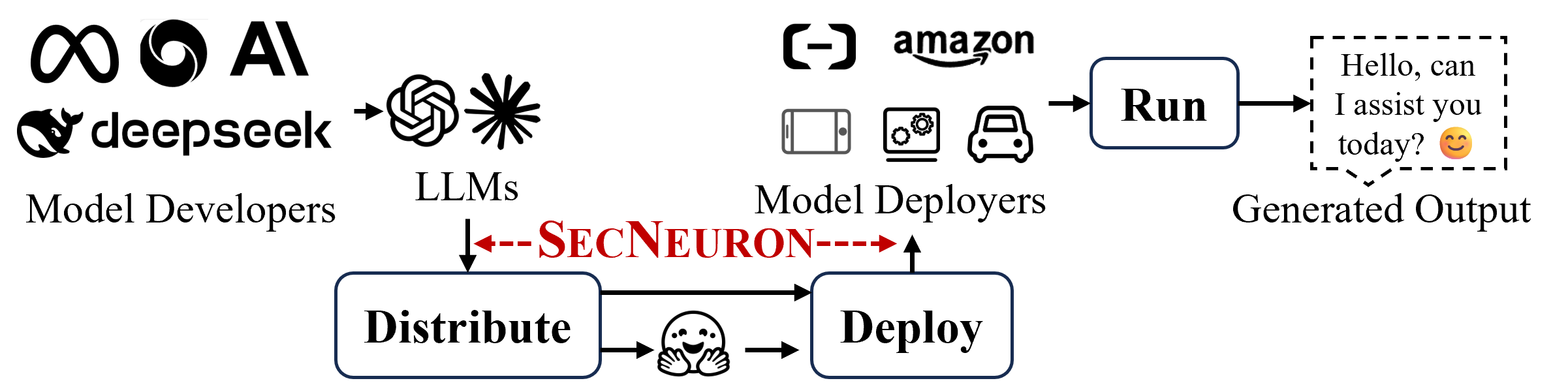}
  \caption{Pipeline for local deployment of LLMs. \textsc{SecNeuron} enhances security during distribution and deployment.}
  \label{fig-scope}
  \vspace{-10pt}
\end{figure}

There is an urgent need for effective methods to dynamically implement task-level abuse control for locally deployed LLMs. As previously discussed, imposing constraints on deployer behaviour in deployer-controlled environments is impossible. Hence, \textbf{we propose that the abuse control mechanisms for local LLMs must directly operate on the instinct capabilities of models, i.e., limiting capabilities on unauthorized tasks.} For example, even if users employ adversarial prompts, an LLM without code generation capabilities would still be unable to generate malicious code (e.g., Figure~\ref{fig-example-code} in \S 6.3). Besides, the mechanisms should be efficient and flexible for LLMs with millions of neurons to support customized deployment.
It brings two challenges that must be addressed:
% Therefore, abuse control solutions for local LLMs must address the following two key challenges:
% \textbf{Flexible Task-level Capability Control for Locally Deployed LLMs.}
% % To overcome the above two barriers, 
% % 为了客服上述两个barrier，There is an urgent need for 细粒度的，unremovable的 主动security mechanisms to ensure the proper use of local LLMs.
% % 任何超出部署方使用授权范围的大模型能力都存在滥用的风险。
% We propose \textsc{SecNeuron}, a task-grained and flexible security mechanism that allows developers to control the capabilities (correctly generating the next token) of locally deployed LLMs via neuron encryption and selective decryption.
% \textbf{It offers a novel perspective to mitigate the abuse of local LLMs by directly limiting their capabilities on unauthorized tasks, which is intrinsic and robust.} For example, if an LLM has poor code generation capabilities, it would be difficult to exploit it for generating malicious code (e.g., Figure~\ref{fig-example-code} in \S 6.3), even when using adversarial prompts.
% With \textsc{SecNeuron}, developers can limit model capabilities on unauthorized tasks with minimal impact on authorized tasks, even when LLMs are physically distributed to deployers.
% Furthermore, the capability limitation of \textsc{SecNeuron} is reversible to support flexible adjustment, as the permissions of different deployers vary and may change dynamically.
% Two technical challenges need to be tackled:

$\bullet$ \textbf{\textit{C1: How to ensure that capabilities on unauthorized tasks are limited without disrupting the authorized tasks?}}
% 我们希望限制模型在未授权任务上的能力而不是简单的限制模型完成未授权的任务,例如安全对齐拒绝回答或者I/O检测。
We aim to limit the LLMs' capabilities (performance on generating next tokens) on unauthorized tasks rather than merely restricting the completion of these behaviors (e.g., safety alignment or I/O detection).
Deactivating or pruning important neurons for specific tasks, like unlearning~\cite{pochinkov2024dissectinglanguagemodelsmachine}, could be an effective method to limit their capabilities.
% inspired by unlearning, Deactivating, or pruning neurons seems like an effective method to limit capabilities on specific tasks.
However, LLMs are optimized for multitasks during training, making it challenging to fully isolate these logically related neurons for different tasks~\cite{raffel2023exploringlimitstransferlearning,caruana1997multitask}, restricting unauthorized tasks may inadvertently affect others. 
Therefore, more effective mechanisms are required for \added{algorithmically} decoupling task-specific neurons and strategies to address unintended coupling.

$\bullet$ \textbf{\textit{C2: How to enable local LLMs adaptable to customized deployer permission with minimal overhead?}}

% 现有的支持定制化的大模型是不可逆的，对于不同的用户权限需求，需要大量的资源重新微调或者维护，传输多个模型版本， with overhead increasing linearly with the number of permissions.
Existing solutions for customized LLMs (Unlearning or distillation/pruning for specific tasks) are inherently irreversible. Developers require significant resources for fine-tuning or maintaining and transmitting multiple versions of LLMs to meet different user permissions, with the overhead increasing linearly with the number of permissions.
Neurons for specific tasks should be dynamically forcibly disabled or activated (reversible) to meet the customized and dynamic permission requirements of different deployment scenarios.

Fortunately, we found that cryptographic tools can provide a reversible and training-free method tailored for customized capability limitation: encrypting specific neurons limits capability on certain tasks while decrypting them restores the capability. 
Based on this intuition, we designed and implemented \textsc{SecNeuron}, a secure mechanism that enables one-time encryption of the LLM, with different users dynamically decrypting and gaining different capabilities (Figure~\ref{fig-workflow}).
Compared to the methods in Table~\ref{tab-compare}, \textsc{SecNeuron} is characterized by its cost-effectiveness, flexibility, and reliability.
% 我们需要高效的机制动态禁用或者激活任务特定的神经元以满足不同部署方定制化和动态的权限需求。
% Cryptography provides a reversible method for capability limitation for customized needs: encrypting specific neurons limits capability on certain tasks while decrypting them restores the capability.
% 加解密提供了一种可逆的能力限制途径，加密神经元即限制任务的能力，解密即恢复能力。
% Although pruning can reduce individual overhead (with a typical pruning rate of 15-30\% without fine-tuning~\cite{10.5555/3495724.3497435,10.5555/3454287.3455544,10.5555/3524938.3525491}), the cumulative overhead from repeated distribution (just twice) quickly exceeds the cost of the entire model.
% This is impractical for devices with dynamically changing permissions or limited network bandwidth.
% Ciphertext-Policy Attribute-Based (CP-ABE)~\cite{4223236,cryptoeprint:2019/966} encryption offers a more flexible solution by enabling one-time encryption, one-time transmission, and multi-permission usage. However, its computational overhead is extremely high, making it unsuitable for LLMs with massive parameters.

\textbf{\textsc{SecNeuron} DESIGN.} 
% We propose \textsc{SecNeuron}, a fine-grained and flexible IP protection mechanism for secure LLM distribution. With \textsc{SecNeuron}, model developers can achieve task-level access control for their LLMs, even after the models have been distributed to deployers. 
The core of \textsc{SecNeuron} are neuron encryption and selective decryption: deployers can dynamically decrypt the neurons they are authorized to access, executing only the authorized tasks.
Firstly, we designed a penalty-based task-specific neuron extraction mechanism to enhance existing neuron importance analysis methods complemented by an efficient mechanism for handling coupled neurons (\textbf{\textit{Addressing C1}}).
% Firstly, we perform an importance analysis of LLM's neurons and introduce the penalty factor to decouple task-specific neurons.
% This enables us to disentangle and identify the essential neurons required for each specific task.
Then, we propose a hybrid encryption framework, particularly designed for LLMs with millions of neurons, that balances the flexibility of attribute-based encryption with the efficiency of symmetric encryption (\textbf{\textit{Addressing C2}}).
\textit{Policy Layer}: Neurons are assigned different keys and access policies based on task relevance. The Ciphertext-Policy Attribute-Based (CP-ABE) with a carefully designed policy tree is used to manage keys and coupled neurons across different tasks.
\textit{Execution Layer}: Advanced Encryption Standard (AES) is employed for neuron parameters encryption and decryption; each neuron only needs to be encrypted once.
Deployers dynamically obtain decryption keys based on their attributes, allowing them to decrypt only the authorized portions.
Since undecrypted neurons can degrade the overall performance of partially decrypted LLM, we designed an undecrypted neuron detection mechanism based on the randomness distribution of ciphertext for adaptive pruning. 
% Unauthorized neurons are detected and adaptively pruned by deployers. 
\textsc{SecNeuron} offers the following advantages:

\textbf{Flexible and Efficient.} \textsc{SecNeuron} implements an efficient mechanism with one-time encryption \& transmission, and multi-party decryption (Figure~\ref{fig-workflow}), enables dynamic capability updates and flexible permission configuration based on user attributes. For example, permissions such as \textit{(Institution = Hospital) AND (Licence = True)} can restrict access to LLMs' diagnosis functionality.

\textbf{Reliable and Certified.} \textsc{SecNeuron} enforces certified capability constraints through neuron-level encryption. Once a neuron's association with an unauthorized task is explicitly identified, \textsc{SecNeuron} can theoretically ensure that it cannot be activated or utilized, providing a provable safeguard against task abuse.
% 未授权的神经元在部署方被检测出并自适应裁剪

% \textbf{1) Reversible Capability Control.} \textsc{SecNeuron} supports task-level capability control over LLMs, limiting unauthorized tasks (and training data) while having minimal impact on others. The unauthorized capabilities can be restored and dynamically changed.
% % 也可以防止对训练数据的提取

% \textbf{2) Efficiency.} \textsc{SecNeuron} implements an efficient secure distribution mechanism with one-time encryption, one-time transmission, and multi-party decryption (Figure ~\ref{fig-workflow}).
% % reducing computational and communication overhead

% \textbf{3) Flexibility.} \textsc{SecNeuron} enables dynamic authorization and flexible permission configuration based on user attributes. For example, permissions such as \textit{(Institution = Hospital) AND (Licence = True)} can restrict access to LLMs' diagnosis functionality.

% Additionally, the scope of \textsc{SecNeuron} is orthogonal to methods like watermarking and TEE (Figure~\ref{fig-scope}), and they can be combined for stronger security (\S 8).
% \textbf{Contributions.} 
We summarize three  contributions of this work:

$\bullet$
We propose a novel abuse mitigation mechanism -\textsc{SecNeuron}- which \added{creatively integrates classic access control policies with the intrinsic capabilities of LLMs}\deleted{integrates cryptographic techniques to constrain the intrinsic task-specific capabilities of LLMs}, enabling flexible, reliable, and certified abuse control even under deployer-controlled environments. To the best of our knowledge, \textsc{SecNeuron} is the first dynamic task-level capability management method for LLMs and can serve as a plugin to secure existing deployment pipelines (Figure~\ref{fig-scope}).
% We propose \textsc{SecNeuron}, a novel abuse control prevention that enforces intrinsic capability limitation on LLMs even within deployer-controlled local environments and can serve as a plugin to existing local deployment pipelines (Figure~\ref{fig-scope}). By creatively integrating cryptographic tools, \textsc{SecNeuron} achieves flexible, reliable abuse control in local LLM deployments for the first time. 
% We propose \textsc{SecNeuron}, a novel abuse control mechanism for LLMs that enforces flexible capability limitation on local LLMs even within deployer-controlled environments.
% We introduce a novel perspective for mitigating local LLM abuse through dynamic capability limitation on unauthorized tasks, which is robust and removal-resistant.
% Then, we implement \textsc{SecNeuron}, a neuron-level encryption mechanism that can serve as a plugin to enhance security for existing local deployment pipelines (Figure~\ref{fig-scope}), enabling developers to maintain granular control over LLM capabilities across different tasks, even after distribution. 

$\bullet$ 
We introduce a task-specific reusable neuron decoupling and managing algorithm that enables task-grained capability control at the neuron level. We further propose a hybrid hierarchical encryption framework to support efficient and flexible encryption and decryption of millions of neurons. Additionally, we develop a ciphertext distribution-based neuron identification algorithm to ensure the effectiveness of partially decrypted LLMs.
% secueuron创造性的从密码学的角度实现了神经元的动态禁用和使能。
% \textsc{SecNeuron} creatively leverages cryptographic approaches to achieve dynamic neuron disabling and enabling for local LLMs.
% We designed a task-specific neuron scoring method and an efficient coupled neuron handling mechanism that identifies and manages neurons for each task capability. Then, we proposed a flexible and efficient hierarchical framework designed for large-scale neuron encryption in LLMs. Finally, we developed a detection mechanism based on the statistical properties of ciphertext randomness for adaptively pruning unauthorized neurons.
% 最后一个检测机制based on加密的随机性 设计用于自适应采集未授权的神经元

% Then, we implement \textsc{SecNeuron}, a neuron-level encryption mechanism for secure and flexible local deployment of LLMs, enabling developers to control their LLMs' capability on different tasks, even after models are physically distributed to deployers.
% It can serve as a plugin for existing pipelines to enhance security during distribution and deployment (Figure~\ref{fig-scope}).
% 我们intorduce一个全新的视角防止大模型的滥用即限制模型在特定未授权任务上的性能，这是一种强制的不可移除的安全机制。

% $\bullet$ We implement \textsc{SecNeuron}, a neuron-level encryption mechanism for secure and flexible local deployment of LLMs, enabling developers to control their LLMs' capability on different tasks, even after models have been physically distributed to deployers.
% It can serve as a plugin for existing pipelines to enhance security during distribution and deployment (Figure~\ref{fig-scope}).

$\bullet$ 
% We validated the effectiveness of SecNeuron across various tasks and LLM architectures.
\textsc{SecNeuron} effectively limits the accuracy of unauthorized tasks to below 25\% while ensuring authorized tasks are impacted by less than 2\%.
It also prevents unauthorized training data extraction, with success rates of PII lower than 5\% and MIA nearly 0\%. 
% SecNeuron 也可以防止未授权的训练数据提取，PII 成功率低于5%，MIA成功率基本为0
Furthermore, \textsc{SecNeuron} reduces the encryption and transmission overhead associated with permissions from linear to constant levels, requiring only millisecond-level key generation and byte-level key exchange for local LLM capability updates.
\begin{figure}[t]
  \centering
  \includegraphics[width=\linewidth]{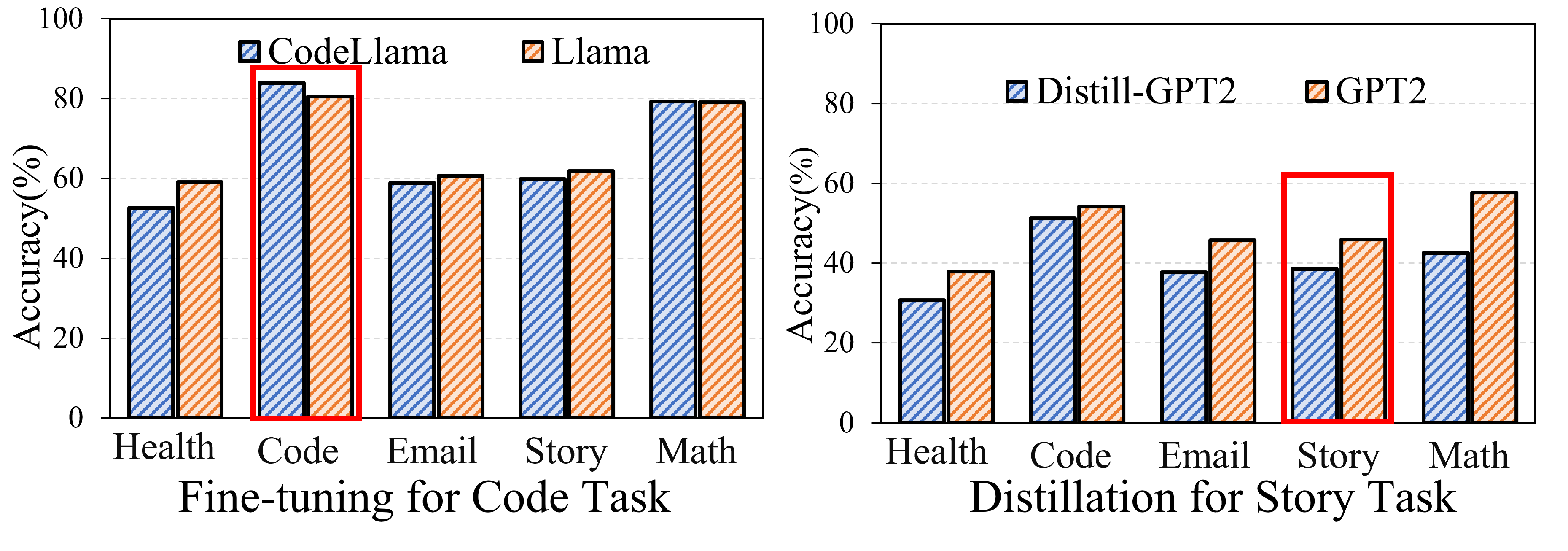}
  \caption{Multi-task performance after fine-tuning for Code task (CodeLlama~\cite{rozière2024codellamaopenfoundation}) or distillation for Story task (Distill-GPT2~\cite{isarth_distill_gpt2_story_generator}). The capabilities of other tasks have not been significantly limited and can still be abused.}
  % 多任务性能after fine-tuning for code task(CodeLlama) and distillation for story task(Distill-GPT2)~\cite{rozière2024codellamaopenfoundation,isarth_distill_gpt2_story_generator}
  
  \label{fig-finetune}
  % \vspace{-15pt}
\end{figure}
\section{Background and Related Works}
\subsection{Motivating Use Cases}
\added{\textsc{SecNeuron} offers developers a framework for controlled LLM distribution in local deployment scenarios.} This section outlines the motivating use cases for \textsc{SecNeuron}, including existing and future scenarios.
% 本节
% 安全的模型发布

\textbf{Secure and Controllable Model Publishing.}
Plenty of LLMs are published on platforms like Hugging Face, allowing users to download and deploy them locally. \textsc{SecNeuron} can be seamlessly integrated into existing pipelines, providing a low-cost solution to enhance both the security (encrypted model transmission) and controllability during the distribution process (Figure~\ref{fig-scope}). 
% This approach effectively protects the intellectual property of developers.
% \textbf{On-demand Payment LLMs Service.} Developers provide unified LLMs to various partners, leveraging \textsc{SecNeuron} to selectively enable specific tasks (e.g., disease diagnosis for hospitals, math problem-solving for schools) on demand. Partners can subscribe to desired tasks based on attributes such as industry type or license, avoiding paying for entire LLMs.

\added{\textbf{Fine-grained Commercial Licensing.} Model developers create large-scale systems capable of executing diverse high-value tasks concurrently. SecNeuron facilitates domain-specific licensing to various clients, enabling customizable task activation according to individual client needs.}

\textbf{Dynamic On-device Deployment.} An increasing number of smart devices come pre-installed with LLMs. When device permissions change (e.g., switching to child mode), \textsc{SecNeuron} can activate or disable specific tasks (e.g., social content) through a lightweight key exchange, avoiding re-downloading the LLM.  
% Combined with runtime pruning technology, it reduces memory usage and ensures optimal performance.

\deleted{\textbf{Limiting Malicious Behavior in LLMs.} Harmful content generation by the model (such as discriminatory speech) can be treated as unauthorized tasks. Through \textsc{SecNeuron}, their neurons are forcibly encrypted, and keys for such tasks are not distributed.}

\subsection{Security Issues: Multi-level Abuse}
% The intellectual property of LLMs comprises two core components: 1)\textbf{Model Level:} Representing algorithmic innovations and the significant costs associated with training; 2) \textbf{Data Level:} Containing domain-specific knowledge, private information, and trade secrets. However, distribution and local deployment of LLMs expose developers to the following risks of intellectual property leakage:

\textbf{Task Abuse (Model Level).}\deleted{The versatility of LLMs enables them to perform a wide range of tasks (e.g., code generation and medical diagnosis). However,} Once the model is distributed, developers lose control over how the model is used, creating risks of performing unauthorized tasks. 
On the one hand, it poses a serious threat to developers' intellectual property, as the functionality of the LLM represents algorithmic innovations and the significant costs associated with training. On the other hand, this may violate legal boundaries. For example, malicious users may exploit unauthorized \deleted{email generation capabilities to create phishing emails or leverage}coding capabilities to generate harmful scripts~\cite{wiggers2023chatgpt}.

\textbf{Training Data Extraction (Data Level).} LLMs may memorize details from their training data, making it possible for malicious users to reverse-engineer training data from the model's outputs. 
\deleted{For example, PII extraction attacks~\cite{lukas2023analyzing} can retrieve personal identification information contained in the training data, while membership inference attacks~\cite{puerto2024scalingmembershipinferenceattacks} can determine whether specific data was included in the training dataset.}
The training dataset, containing domain-specific knowledge and trade secrets, is also essential for developer's intellectual property. Moreover, some datasets include large amounts of sensitive data (such as PII in Enron Email Dataset~\cite{enron_dataset}), posing significant security risks.
% 训练数据集Containing domain-specific knowledge, and trade secrets也是模型开发方的重要知识产权，并且部分数据集包含大量敏感数据，这种未授权的提取也带来了严重的安全风险

\subsection{Related Works for Mitigating Abuse}
% This section mainly discusses the security mechanisms可能 available for缓解大模型本地任务级别的滥用。
% This section mainly discusses the security mechanisms that might be available for mitigating task-level abuse of LLMs.

\textbf{Distillation \& Pruning \& Fine-tuning for Specific Task.} 
These approaches are designed to adapt LLMs to specific tasks~\cite{rozière2024codellamaopenfoundation,isarth_distill_gpt2_story_generator,Selvaraju_2018_ECCV,liu2018understanding,song2024does}, which effectively preserve LLMs' capabilities for authorized (target) tasks but do not impose strict constraints on unauthorized ones. As illustrated in Figure~\ref{fig-finetune}, even after fine-tuning or distillation, the model retains its capability to perform other tasks, which may lead to potential abuse.

\textbf{Safety Alignment.} Safety alignment~\cite{ouyang2022traininglanguagemodelsfollow,10.5555/3600270.3602281,10.5555/3294996.3295184,foster2023fastmachineunlearningretraining} aligns LLMs with specific safety objectives and constraints through fine-tuning or reinforcement learning, ensuring their behaviour adheres to authorized tasks. Such methods, which primarily rely on refusal to respond, can limit the behaviour of unauthorized tasks but fail to fundamentally restrict the underlying capabilities of LLMs, leaving them vulnerable to adversarial prompts~\cite{10.5555/3666122.3669630,zou2023universaltransferableadversarialattacks,10.1145/3658644.3690346,10.1145/3658644.3690346}.
% 但此类方法本质上是拒绝回答，虽然可以限制未授权任务的行为，但本质上并没有限制其能力，被证明是脆弱的for 对抗性的promt例如越狱攻击

\textbf{Unlearning}. Unlearning~\cite{pochinkov2024dissectinglanguagemodelsmachine,9519428,foster2023fastmachineunlearningretraining} aims to remove or mitigate knowledge or patterns that a model has previously learned, aligning closely with our goal of limiting the model’s capabilities. However, unlearning is irreversible, meaning that once capabilities for a specific task of an LLM are restricted, they cannot be restored, making it unsuitable for dynamic, customized local deployment. 

% \textsc{SecNeuron} can also serve as a dynamic and reversible unlearning mechanism.

\textbf{Watermarking.}
Watermarking techniques embed invisible markers for copyright verification by modifying model parameters~\cite{szyller2021dawndynamicadversarialwatermarking,rouhani2018deepsigns,chen2019blackmarks} or output distribution~\cite{abdelnabi2021adversarial}. Recently, numerous watermarking methods tailored for LLMs have been proposed~\cite{zhang2024watermarking,liang2024watermarking,xu2025mark,ye2025periodic}, especially,~\cite{xu2025mark} proposes watermark-based protections to address misuse of local LLMs. However, these approaches fall under post hoc detection methods and cannot proactively prevent misuse.

\textbf{Malicious I/O Detection.} They work by monitoring and restricting behavior on unauthorized tasks through external input and output detection~\cite{cao-etal-2024-defending,xie-etal-2024-gradsafe,zhang-etal-2024-defending}. It is widely used in cloud-based LLM applications and is potentially an efficient method for customizing model tasks across different deployment scenarios. However, such temporary solutions can be easily bypassed or removed when LLM is deployed in deployer-controlled environments~\cite{rando2025do}.
\subsection{CP-ABE \& AES-CTR}
This section introduces cryptography algorithms used in the paper.

\textbf{Ciphertext-Policy Attribute-Based Encryption (CP-ABE)}.
CP-ABE~\cite{4223236,cryptoeprint:2019/966} is an advanced cryptographic technique that enables fine-grained access control over encrypted data.
Data owners define access policy, which specifies the conditions under who can decrypt. Decryption is only possible if the user's attributes satisfy the access policy embedded in the ciphertext.
% The implementation of CP-ABE relies on bilinear pairing over elliptic curve cryptography~\cite{meffert2009bilinear}.
A CP-ABE Cryptor ($E_1$ in Algorithm ~\ref{alg:enc},\ref{alg:dec}) includes four main phases:

(1)\texttt{Setup} ($PK,MSK \leftarrow E_1.setup()$): Initialize bilinear group $G_1$ and target group $G_T$; generate $PK$ and $MSK$.

(2)\texttt{Encrypt} ($C_p\leftarrow E_1.encrypt(PK,M,p)$): use the $PK$ to encrypt $M$ and embed the access policy $p$ into the ciphertext $C_p$. $p$ is often represented as a tree, where the nodes correspond to logical operators such as AND, OR, and threshold gates.

(3)\texttt{KeyGen} ($SK \leftarrow E_1.keyGen(PK,\mathcal{A})$): generate attribute-based secret key $SK$ by $PK$ and attribute $\mathcal{A}$.

(4)\texttt{Decrypt} ($M'\leftarrow E_1.decrypt(PK,SK,C_p)$): attempt to decrypt the ciphertext $C_p$ using $SK$.
If attributes $\mathcal{A}$ satisfy the access policy in the ciphertext, the decryption succeeds; otherwise, it fails.

% A CP-ABE scheme, CPABE=($\texttt{Setup}$, $\texttt{KeyGen}$, $\texttt{Encrypt}$, $\texttt{Decrypt}$), has four algorithms:
% \begin{itemize}
% \item {$\texttt{Setup}(\lambda)\rightarrow (MPK,MSK)$: The $\texttt{Setup}$ takes a security parameter $\lambda$ as input. It outputs the master public-secret key pair ($MPK,MSK$).}
% \item {$\texttt{Encrypt}(MPK,\mathbb{A},M)\rightarrow C_{\mathbb{A}}$: The $\texttt{Encrypt}$ takes the master public key $MPK$, an access policy $\mathbb{A}$, and a plaintext $M$ as input. It outputs the ciphertext $C_{\mathbb{A}}$.}
% \item {$\texttt{Decrypt}(SK_S, C_{\mathbb{A}})\rightarrow M'$: The $\texttt{Decrypt}$ takes as input a secret key $SK_S$ associated with an attribute set $S$ and the ciphertext $C_{\mathbb{A}}$ encrypted under an access policy $\mathbb{A}$. It succeeds and outputs a plaintext $M'$ if $S$ satisfies $\mathbb{A}$. Otherwise, it aborts.}
% \end{itemize}

\textbf{Advanced Encryption Standard - Counter Mode (AES-CTR).}  AES-CTR~\cite{selent2010advanced,nechvatal2001report} is a widely used symmetric encryption mode that transforms AES into a stream cipher. By combining a unique nonce and an incrementing counter, AES-CTR generates a keystream derived from the master key, which is then XORed with the plaintext for encryption or with the ciphertext for decryption. It is highly efficient, parallelizable, and supports random access to encrypted data, making it suitable for applications. 
% A AES Cryptor ($E_2$ in Algorithm ~\ref{alg:enc},\ref{alg:dec}) includes three main phases:

% % (1)\texttt{Setup} ($PK,MSK \leftarrow E_1.setup()$): Initialize the bilinear group $G_1$ and target group $G_T$; generate public key $PK$ and master secret key $MSK$.
% % , which are used to derive attribute-based secret keys for users. 
% (1)\texttt{Init}: set the initial counter value and nonce, making sure they are consistent between the encryption and decryption parties.
% % 切换CTR模式并设置初始计数值，加密方和解密方的初始计数值要一致

% (1)\texttt{Encrypt} ($C\leftarrow E_2.encrypt(K,M)$): plaintext $M$ is XORed with a keystream generated from the $K$  to produce the ciphertext $C$.

% (2)\texttt{Decrypt} ($M\leftarrow E_2.decrypt(K,C)$): the ciphertext $C$ is XORed with the same keystream to recover the $M$.

% \section{Formulation}
% \textbf{Task Definition.}
\begin{figure*}[ht]
  \centering
  \includegraphics[width=\linewidth]{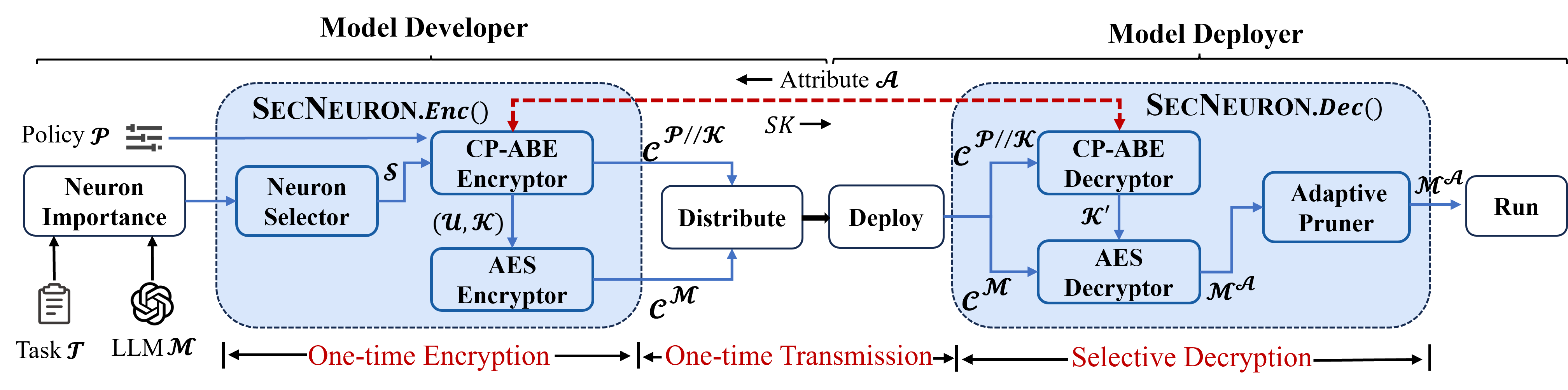}
  \caption{Overview of \textsc{SecNeuron} framework. \textsc{SecNeuron} (blue part) serves as a plug-and-play secure tool in model distribution and local deployment pipeline.
  When permissions change, developers can adjust the capability of the local LLM by only simple key exchanges (red dotted line) with $O(1)$ complexity ($ms-level$ computation and $B-level$ transmission overhead), eliminating re-encrypt and re-transmit LLM with $O(\mathcal{M})$ complexity ($mins-level$ computation and $GB-level$ transmission overhead).
  % 部署方权限变化时，只需要简单的密钥交互（ms级计算、B级传输）即可调整本地大模型的能力范围，无需重LLM钟级别、GB级传输）。
}
  \label{fig-overview}
  \vspace{-10pt}
\end{figure*}

\section{Overview}

\subsection{Threat Model}
As shown in Figure~\ref{fig-workflow}, our system consists of two primary entities: the \textit{\textbf{Model Developer}} and \textit{\textbf{Model Deployer}}.

\textit{\textbf{Model Developer}}.
\deleted{Model developers train LLMs using proprietary or high-value datasets, enabling these models to perform a wide range of downstream tasks.} 
% \added{Model developers utilize high-value datasets and extensive resources to train LLMs, thereby enabling these LLMs to possess the capability to perform a wide range of tasks.}
\deleted{Once trained, the models are distributed to model deployers for local deployment and use, completely beyond the developer's control.} 
\deleted{Model developers can define task-level access policies for their LLMs tailored to different deployers, aiming to ensure the proper uses of their LLMs from both Model Level and Data Level for each deployer (see §2.2).} 
\added{
Model developers train LLMs with the capability to perform a wide range of tasks, and then distribute these well-trained models to model deployers for local deployment and use. The training process may involve high-value or diverse datasets, as well as proprietary architectures and training methodologies. Consequently, developers may seek to distribute these models under controlled conditions for local deployment, specifically by defining task-level access policies tailored to different deployers to ensure the proper use of their LLMs.
}
% Model developers can define task-level access policies for their LLMs for different deployers and aim to protect the security of their LLMs from both Model Level and Data Level (\S 2.2), limiting unauthorized task abuse and data extraction simultaneously.

% % Model developers aim to safeguard their IP at two key levels:
% % 模型开发方希望保护LLM的安全从两个方面
% $\bullet$ \textit{Model Task Functionality (Model Level).} 
% Developers hope to ensure that users can only utilize the LLM for authorized tasks to protect intellectual property not covered by the user's authorization (or to prevent harmful misuse).

% $\bullet$ \textit{Task-specific Training Datasets (Data Level).} The datasets used to train models for specific tasks may be meticulously curated (e.g., proprietary codebases) or contain sensitive personal information (e.g., patient medical records), which should be protected from unauthorized access or extraction.

\textit{\textbf{Model Deployer}}.
The model deployer, in this context, acts as the adversary. Different deployers have dynamic and customized deployment requirements, often a specific task subset of the LLM.
% 不同的部署方有动态的、定制化的部署需求（模型的部分任务）。
\textbf{They have access to a locally deployed (fully white-box) LLM with authorization for some tasks (Partially Authorized Users) or none at all (Passive Attacker).} Any capabilities of LLMs that exceed the authorized scope of deployers pose the risk of abuse through the following malicious behaviours:
% 模型部署方即攻击者，具有模型的部分任务功能的使用授权（或者无）和下载的本地大模型。模型部署方可能存在两类的恶意行为破坏开发方的知识产权：

$\bullet$ \textit{Unauthorized Task Abuse} aims to illegally invoke the LLM to perform unauthorized tasks.

$\bullet$ \textit{Training Data Extraction} represents a more advanced attacker aiming to infer unauthorized training data from the model outputs. We specifically consider two common training data extraction attacks for LLMs: membership inference~\cite{puerto2024scalingmembershipinferenceattacks} and PII extraction~\cite{lukas2023analyzing}.

\textbf{Usage Scope.} \textsc{SecNeuron} \added{offers a secure framework for controlled model distribution and local deployment that mitigates potential abuse of locally deployed LLMs.}
\deleted{provides a secure mechanism during the distribution and local deployment phases for mitigating abuse of local LLMs.}
Ensuring security during runtime from parameter theft has been extensively studied~\cite{10.1145/3386901.3388946,PhalaNetwork2024}. \textsc{SecNeuron} is orthogonal to them and can combine for more comprehensive protection (\S 8).
\textsc{Secneuron} aims to prevent \deleted{unauthorized abuse of LLM capabilities}\added{users from abusing  LLM capabilities for which they lack authorization} and can resist collusion, \added{i.e., performing unauthorized tasks by combining partially authorized keys or permissions.}. 
\added{The reconstruction of full model capabilities by collusive users who possess authorization for all tasks falls outside the scope of this study, as such attacks stem from underlying access policy vulnerabilities (Details in \S 5.3).}
\deleted{However, collusion threats arising from access policy vulnerabilities are beyond the scope (Details in \S 5.3).}
% Secneuron用于防止未授权的访问（可以抵抗共谋攻击），由于访问策略造成的合谋威胁并不属于我们的范畴。

\subsection{Design Goals and Formulation}
LLMs are fine-tuned using specific training datasets to enable specific downstream tasks. Based on this, we divide tasks $\mathcal{T}$ according to the training datasets $\mathcal{D}$, meaning that different datasets $D_t$ correspond to different tasks $t$. 
The protection scope of \textsc{SecNeuron} can be summarized as a tuple $< \mathcal{D},\mathcal{T}>$, i.e., the high-value datasets and the model capabilities trained on them, restricting them from being unauthorized abuse and extraction.

Given a well-trained LLM $\mathcal{M}$ and a set of tasks $\mathcal{T}$, \textsc{SecNeuron} achieves secure and controllable distribution and deployment through encryption and selective decryption of neurons in LLM:

% s 通过对大模型加密和选择性解密实现安全可控的分发和部署
Developers define access policy $\mathcal{P}$ and encrypt LLM using \textsc{SecNeuron} (Algorithm~\ref{alg:enc}: $\mathcal{C}^\mathcal{M},\mathcal{C}^\mathcal{P//K} \leftarrow \textsc{SecNeuron}.Enc(\mathcal{M},\mathcal{T},\mathcal{P})$); 

Deployers decrypt the encrypted model based on their attribute $\mathcal{A}$ (Algorithm~\ref{alg:dec}: $\mathcal{M}^\mathcal{A} \leftarrow \textsc{SecNeuron}.Dec(\mathcal{C}^\mathcal{M},\mathcal{C}^\mathcal{P//K},\mathcal{A})$).

$\mathcal{M^A}$ needs to meet the following objectives:

\textit{1) Preserve the performance of authorized tasks.} 
For any task $t$ in the authorized task set $\mathcal{T_A}$, the performance of $\mathcal{M^A}$ is essentially consistent with that of $\mathcal{M}$.
% 对于任意Tin授权任务集合T_A, MA在其对应数据集上表现的性能和M基本保持一致：
\begin{equation}
    \forall t \in \mathcal{T_A}, \mathcal{P}_t(\mathcal{M}) - \mathcal{P}_t(\mathcal{M^A})< \gamma_t, 
\end{equation}
$\gamma_t$ is a small constant, where smaller indicate greater similarity in performance. $\mathcal{P}_t(\mathcal{M})$ represents the performance (generating correct next tokens) on task $t$ for $\mathcal{M}$, which, in this paper, is measured using $Accuracy$ (Eqution \ref{eq:accuracy}).

\textit{2) Limit the capabilities of unauthorized tasks.}
For task $t$ in the unauthorized task set $\mathcal{T_U}$, performance of $\mathcal{M^A}$ should be limited.
\begin{equation}
    \forall t \in \mathcal{T_U}, \mathcal{P}_t(\mathcal{M^A})< \delta_t,
    \label{eq:limit}
\end{equation}
$\delta_t$ represents a small positive number that defines the upper bound of the model's performance on unauthorized tasks.

\textit{3) Prevent extraction of unauthorized training data.}
For task $t$ in unauthorized task set $\mathcal{T_U}$, prevent the extraction of $D_t$ from $\mathcal{M^A}$:
\begin{equation}
   \forall t \in \mathcal{T_U}, MIA(\mathcal{M^A},D_t)<\epsilon, PII(\mathcal{M^A},D_t)<\epsilon, 
\end{equation}

$MIA(\cdot)$ (Membership Inference Attack)~\cite{puerto2024scalingmembershipinferenceattacks} and $PII(\cdot)$ (Personally Identifiable Information Attack)~\cite{lukas2023analyzing} refer to the attack success rate of two commonly used training data extraction methods. $\epsilon$ is a small positive number that limits the success rate.

Developers can adjust $\gamma_t,\delta_t,\epsilon$ according to the value or security requirements of the task $t$ (data). For example, for highly sensitive and valuable \textit{Health} tasks, a smaller $\delta_t$ can be set, whereas, for $Story$ task, the restrictions on $\delta_t$ can be relaxed.
% 开发方可以根据任务（数据）的价值或安全性自行调整x,y,z等参数的值，例如对于高敏感、高价值的health任务，可以设置更小的d，反之对于story任务，可以放宽d的限制。

\textsc{SecNeuron} must also fulfill the following practicality objectives:

\textit{1) Low Overhead.}
Encryption and decryption must have efficient computational overhead for plenty of neurons, and the encrypted model should not introduce excessive transmission overhead.
% 加解密的计算开销对大模型大量的参数要高效的，同时密文大模型不应该引入过量的额外的传输开销。

\textit{2) Flexibility.}
\textsc{SecNeuron} should support complex permission configurations, allowing locally downloaded LLMs to achieve efficient capability adjustments based on the permission of deployers.

\subsection{Challenges and Solutions}

$\bullet$ \textbf{\textit{C1: How to ensure that capabilities on unauthorized tasks are limited without disrupting the authorized tasks?}}
% 剪枝或者微调for特定的任务可以保留授权任务的性能但并不能有效的限制未授权的任务。

% Distillation or fine-tuning for specific tasks can maintain performance on authorized tasks but is ineffective at restricting unauthorized tasks (Figure ~\ref{fig-finetune}).
Deactivating or pruning neurons corresponding to unauthorized tasks seems like an effective way to limit capability on these tasks. However, modern LLMs are inherently multitask systems designed to handle a wide range of tasks by leveraging shared neuron representations, meaning multiple tasks may activate the same neurons.
Existing pruning methods~\cite{Selvaraju_2018_ECCV,liu2018understanding,song2024does} overlook the coupling of neurons across different tasks. Specifically, important neurons for unauthorized tasks may also partially contribute to authorized tasks. Naively pruning them could unintentionally degrade the performance of authorized tasks. Similarly, focusing solely on preserving important neurons for the target task does not guarantee restriction of unauthorized tasks (Figure ~\ref{fig-finetune}).
% 同样，只考虑保留目标任务的神经元也不一定可以

\textbf{Solution.} 
\added{Recognizing that perfect, mutually exclusive neuron isolation for each task is often impractical, our approach focuses on managing and mitigating the effects of neuron coupling.}
To achieve this, we first introduce the penalty factor $\lambda$ to enhance the decoupling of task-specific neurons~\cite{pochinkov2024dissectinglanguagemodelsmachine} (critical for target tasks but insignificant for others).
% Inspired by~\cite{pochinkov2024dissectinglanguagemodelsmachine}, the task-specific neurons should be critical for target tasks but insignificant for others.  
However, overlapping neurons across different tasks are sometimes unavoidable. Thus, we embed the control of overlapping neurons into the access tree of CP-ABE (Figure~\ref{fig-ac-tree}), eliminating the need for additional management.
% 参考x,我们引入了惩罚因子加强不同任务神经元的解耦，提取任务相关神经元：对目标任务重要且对其他任务不重要（）。此外我们利用了CP-ABE灵活的访问树机制，处理不可避免的神经元的重叠，以降低对授权任务的影响（将重叠神经元的控制嵌入访问树，无需额外的管理机制）。
% 我们在现有的神经元重要性

$\bullet$ \textbf{\textit{C2: How to enable local LLMs adaptable to customized deployer permission with minimal overhead?}}

Deployer permissions are dynamic and require flexible capability adjustment for local LLMs. 
For example, when children use LLMs, it is necessary to limit the capability of social content generation.
Existing solutions for customized LLMs are irreversible, imposing prohibitive overhead on developers and deployers: Developers must maintain multiple task-specific model versions (including encryption, storage, and transmission), and deployers must repeatedly download and deploy the corresponding LLM versions.
As the complexity of permission combinations increases, the management cost and overhead grow linearly.
While task-specific distillation or pruning~\cite{10.5555/3495724.3497435,10.5555/3454287.3455544,10.5555/3524938.3525491} can reduce the cost of a single transmission, the cumulative cost of multiple transmissions remains significant.
\textbf{Solution.}
% 我们从密码学的角度提出了一种可逆的 capability limitation for customized LLM needs: encrypting specific neurons limits capability on certain tasks while decrypting them restores the capability (dynamically disabled or activated neurons)
From the cryptographic perspective, we propose a reversible capability limitation mechanism for customized LLM needs: encrypting neurons limits capability on certain tasks while decrypting them restores the capability.
Specifically, we introduce a hybrid encryption mechanism to handle the vast number of parameters in LLMs, balancing the flexibility of CP-ABE and the efficiency of AES:

\textit{1)Policy Layer} (\textbf{CP-ABE Encryptor}): Assigns AES keys to neurons based on their task relevance and binds access policies to these keys. Then, CP-ABE is used to encrypt and manage them.

\textit{2)Execution Layer} (\textbf{AES Encryptor}): Uses AES-CTR to encrypt neuron parameters in LLMs, with AES keys for each neuron generated and managed by \textit{Policy Layer}.

Deployers download the complete encrypted model and then obtain the authorized keys based on their attributes (permissions) to decrypt and utilize permitted tasks adaptively.
% This hybrid mechanism is also useful in other domains, such as video sharing~\cite{10682941} and cross-domain data sharing ~\cite{YANG2024110863}.
% \floatsetup[algorithm]{style=plain, bottom=10pt} % 减小底部间距

\begin{algorithm}[t]
% \SetAlgoLined
\LinesNumbered
\KwData{Original Model: $\mathcal{M}$, Task List: $\mathcal{T}$, Policy $\mathcal{P}$ }
\KwResult{Encrypted Model: $\mathcal{C}^\mathcal{M}$,Policy $\&$ Key Encrypted by CP-ABE: $\mathcal{C}^\mathcal{P//K}$}
$\mathcal{C}^\mathcal{P//K} \leftarrow \emptyset, \mathcal{C}^\mathcal{M}\leftarrow \mathcal{M}$\;
Create CP-ABE Cryptor $E_1$ and AES Cryptor $E_2$\;
$\blacktriangleright$ \textbf{Neuron Selector} \\
Select important neurons for each task: $\mathcal{S} \leftarrow \textsc{SecNeuron}.select(\mathcal{M},\mathcal{T})$\;
$\blacktriangleright$ \textbf{CP-ABE Encryptor:} $\boldsymbol{Enc_{ABE}(PK,\mathcal{P}//\mathcal{K})\rightarrow \mathcal{C}^\mathcal{P//K} }$\\
CP-ABE Init.: $PK,MSK \leftarrow E_1.setup()$\;
Decompose all possible combinations of $\mathcal{S}$ into disjoint subsets: $\mathcal{U} \leftarrow \{\bigcap_{t \in \mathcal{T'}} \mathcal{S}_t \setminus \bigcup_{t \in \mathcal{T}\setminus \mathcal{T'}} \mathcal{S}_t : \mathcal{T'} \subseteq \mathcal{T}, \mathcal{T'} \neq \emptyset\} \cup \{\Omega \setminus \bigcup_{t \in \mathcal{T}} \mathcal{S}_t\}$  \;
\ForEach{subset $\mathcal{N} \in \mathcal{U}$}{
Select a random element $k_{gt}$ from the target group GT\;
Create access policy $p$ for $k_{gt}$ based on $\mathcal{P}$\;
Encrypt $k_{gt}$ and $p$ using $PK$ by $E_1$: $c\leftarrow E_1.encrypt(PK,k_{gt},p)$\;
$\mathcal{C}^\mathcal{P//K} \leftarrow \mathcal{C}^\mathcal{P//K} \cup c $\;

$\blacktriangleright$\textbf{AES Encryptor:} $\boldsymbol{Enc_{AES}(\mathcal{K},\mathcal{M})\rightarrow \mathcal{C}^\mathcal{M} }$\\

Generate 64-bit AES key $k$ from $k_{gt}$\;
\ForEach{neuron $n \in \mathcal{N}$}{
    Encrypt $\mathcal{M}$ using $k$ by $E_2$: $\mathcal{C}^\mathcal{M}_n\leftarrow \textsc{SecNeuron}.encrypt(k,\mathcal{M},n,E_2)$ 
}
}

\Return $\mathcal{C}^\mathcal{M},\mathcal{C}^\mathcal{P//K}$\;
\caption{Encryptor: $\textsc{SecNeuron}.Enc()$}
\label{alg:enc}
\end{algorithm}

\section{\textsc{SecNeuron} Design}
As shown in Figure~\ref{fig-overview}, \textsc{SecNeuron} has two components: the Encryptor (Algorithm~\ref{alg:enc}: \textsc{SecNeuron}.$Enc()$) for the model developer and the Decryptor (Algorithm~\ref{alg:dec}: \textsc{SecNeuron}.$Dec()$) for the model deployer. The Encryptor is executed only once to generate an encrypted version of the LLM. Different deployers can adaptively use the Decryptor to access the authorized portions of the encrypted model based on their permissions.
% \textsc{SecNeuron}包含两部分,Encryptor for Model Developer 和 Decryptor for Model Deployer. Encryptor只需要执行一次,生成加密的大模型版本. 不同的模型部署方可根据权限,自适应的使用Decryptor机密大模型的授权部分.当部署方权限发生变化的时候,也无需重新下载模型,只需要简单的密钥交互,即可在本地完成大模型的任务更新.实现单次加密,单次传输,多方选择性解密的灵活安全的大模型分发机制.
\subsection{Encryptor for Model Developer}

\textbf{Neuron Selector.}
\begin{algorithm}[h]
% \SetAlgoLined
\LinesNumbered
\KwData{Original Model: $\mathcal{M}$, Task List: $\mathcal{T}$, Importance Score Function $S$, Importance Threshold: $\tau$}
\KwResult{Selected Neurons: $\mathcal{S}$}
$\mathcal{S} \leftarrow \emptyset$\;
\ForEach{task $t\in \mathcal{T}$}{
    $\mathcal{S}_t \leftarrow \emptyset$\;  
    Calculate the task-specific score for each neuron $n$: $s_n\leftarrow S(\mathcal{T},t,n)$\;
    Normalize and Sort $s$ in descending order\;
    Select Neurons: \\ $\mathcal{S}_t \leftarrow \mathcal{S}_t.append(n)$ if $\mathcal{S}_t.sum()+s_n<\tau \cdot s.sum()$ \;
    $\mathcal{S}\leftarrow \mathcal{S}\cup \mathcal{S}_t$\;
}
\Return $\mathcal{S}$
\caption{$\textsc{SecNeuron}.select()$}
\label{algm:selector}
\end{algorithm}
% \vspace{-10pt}
The Neuron Selector is used to identify task-specific neurons (important only for the target task but not for other tasks) for each task in $\mathcal{T}$ (Lines 3-4 in Algorithm~\ref{alg:enc}). Referenced from~\cite{pochinkov2024dissectinglanguagemodelsmachine}, we use the mean of absolute activation to calculate the importance of each neuron, and then we introduce $\lambda$ as a penalty factor to calculate the task-specific score. Let $n$ be a neuron and denote its activations by $z_n$. Given a task $t \in \mathcal{T}$ and its sampled training dataset $D_t$, we define task-specific scoring function $S$ as:
% 神经元选择器用于筛选对每个任务而言的重要神经元.参考自[],我们使用the mean of absolute activation作为指标计算各神经元对任务的重要性：
\begin{equation}
    I(t,n):=\frac{\sum_{d\in D_t}z_n(d)}{|D_t|},
\end{equation}
% where \( |D_t| \) represents the number of samples in the training dataset \( D_t \).
% \begin{equation}
%     S(\mathcal{T},t,n):=\frac{I(t,n)}{\sum_{t'\in \mathcal{T}}I(t',n)}.
% \end{equation}
\begin{equation}
    S(\mathcal{T},t,n):=I(t,n)- \lambda \cdot \max_{t'\in \mathcal{T},t' \neq t} I(t',n)
\end{equation}
The larger the value of $S$, the more important neuron $n$ is specific for task $t$. Therefore, we select neurons from the largest $S$ value for each task, continuing until the cumulative sum exceeds the threshold $\tau$.
\textbf{The calculation of $I(t,n)$ can adopt other effective neuron importance estimation methods.}
% 值得注意的是，A的计算可以选择其他的更有效的神经元重要性计算方法，这并不是我们工作的范畴，本文选择x只是作为实现举例。
In particular, the neuron selection algorithm is shown in Algorithm \ref{algm:selector}.
% 具体的神经元选择算法如alg 2所示。
% S越大，说明神经元n对任务t越重要，因此我们为每个任务选择S值较大的神经元

\textbf{CP-ABE Encryptor} \textit{(Policy Layer).}
CP-ABE Encryptor is responsible for key management and does not directly participate in model encryption (Lines 5-12 in Algorithm~\ref{alg:enc}). The process involves three key steps: \textit{CP-ABE Init.}, \textit{AES Key \& Policy Gen.} and \textit{CP-ABE Enc.}:

\textit{1) CP-ABE Init.} Directly invoke the setup mechanism of CP-ABE to generate a public key $PK$ and a master secret key $MSK$, which are used to derive attribute-based secret keys $SK$ for users.
% 直接调用CP-ABE的setup机制，生成a public key and a master secret key,which are used to derive attribute-based secret keys for users based on their associated。 attributes.

\textit{2) AES Key \& Policy Gen.}
For selected important neurons $S$, we decompose them into multiple disjoint subsets to address the overlap between task-specific neurons of different tasks (the overlapping neurons are treated as a separate subset. For example, neurons from $t1$, neurons from $t2$, and neurons shared by both $t1$ and $t2$ are decomposed into distinct subsets $\mathcal{U}$ = \{$\mathcal{S}_{t1},\mathcal{S}_{t2},\mathcal{S}_{t1}\cap \mathcal{S}_{t2}$\}) (Line 7 in Algorithm \ref{alg:enc}). 
For each subset $\mathcal{N}\in \mathcal{U}$, we randomly select an $k_{gt}$ from $G_T$ group of CP-ABE to generate the AES key $k$, which serves as the encryption key for all neurons in that subset (Figure~\ref{fig-key-ass}).
At the same time, the access policy tree is constructed, assigning policy $p$ to each key (Line 8-10 in Algorithm \ref{alg:enc}).
% 此外，没有被任何任务选择的神经元我们将其视为common神经元，也作为一个单独的子集对其分配密钥并加密，其访问策略为任意授权任务可访问，以提升对passive attacker的安全性。
\added{Furthermore, neurons that are not selected by any task are designated as a separate common subset. These common neurons are also assigned an encryption keys with the access policy that permits any authorized task to access them, thereby enhancing security against passive attackers.}

Access to overlapping neurons is also integrated as part of the policy tree, eliminating the need for additional management steps. Specifically, we divide the policy tree into two layers based on tasks (Figure~\ref{fig-ac-tree}): Neuron-level Policy and User-level Policy. 
To minimize the impact on authorized tasks, Neuron-level Policy employs \textit{OR} nodes to manage the overlap of neurons and is a built-in, immutable mechanism. 
User-level Policy specifies the access rights of deployers with different attributes for each task, allowing developers to adjust the policy flexibly according to specific requirements.

\textit{3) CP-ABE Enc.}
Encrypt all generate $k_{gt}$ and embed the policy $p$ into the ciphertext, with directly using the CP-ABE encryption function $E_1.encrypt(\cdot)$.
% 加密生成的所有的k并将策略p嵌入密钥，直接调用CP-ABE的加密函数实现
% 对于每个任务筛选的重要神经元S,我们将其分解为不相交的多个子集，以处理不同任务重要神经元之间的重叠现象（重叠神经元作为单独的子集：例如分解任务A的神经元、任务B的神经元、任务A交集任务B的神经元，各子集之间没有交集）。对每个子集，随机选择一个元素kgt并基于此生成64bit作为该子集的AES密钥（该子集内所有的神经元都使用k作为AES密钥加密）.同时生成访问策略树，为每个密钥配置访问策略。访问策略树的构造如图x所示。为了高效的处理重叠的神经元，我们将重叠神经元的访问也作为访问策略树的一部分，无需对其进行额外的管理。具体而言，我们以任务为分界，将策略树分为两层：Neuron-level policy和 user-level ,Neuron-level policy用于处理不同任务神经元的重叠，内置机制不可修改,用户层策略规定不同属性用户对任务的访问，开发方可根据需求灵活调整policy.
\begin{figure}[t]
  \centering
  \includegraphics[width=.97\linewidth]{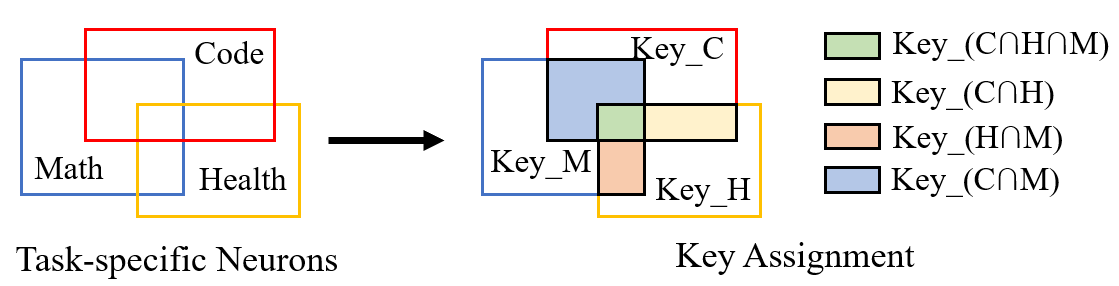}
  % CP-ABE 加密器的访问策略树举例，以任务为界划分神经元层策略（用于处理不同任务神经元的重叠，内置机制不可修改）用户层策略（规定不同属性用户对任务的访问，开发方可根据需求灵活调整）。红色举例授权访问的例子，满足 Institution=Hospital and 许可=True的部署方可以部署大模型的诊断功能，进而可以访问xxx等密钥，解密大模型对应的参数
  \caption{Illustration of AES key assignment.}
  \label{fig-key-ass}
  \vspace{-4pt}
\end{figure}
\begin{figure}[t]
  \centering
  \includegraphics[width=.95\linewidth]{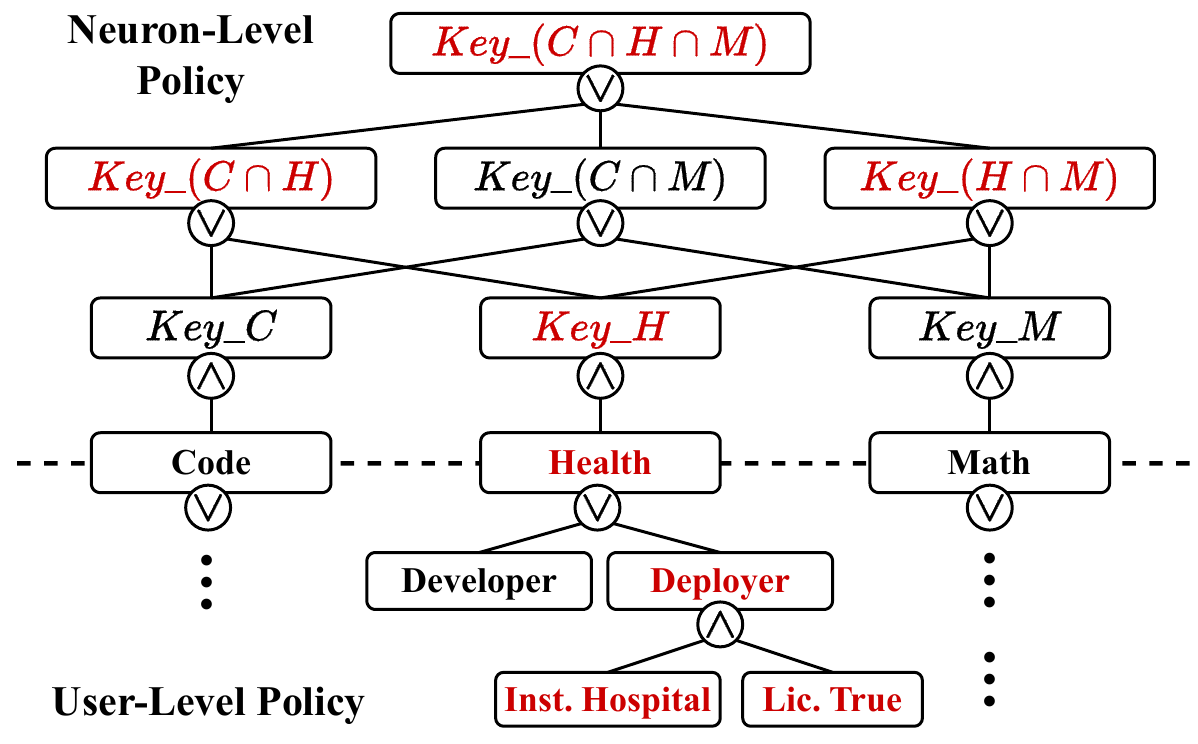}
  % CP-ABE 加密器的访问策略树举例，以任务为界划分神经元层策略（用于处理不同任务神经元的重叠，内置机制不可修改）用户层策略（规定不同属性用户对任务的访问，开发方可根据需求灵活调整）。红色举例授权访问的例子，满足 Institution=Hospital and 许可=True的部署方可以部署大模型的诊断功能，进而可以访问xxx等密钥，解密大模型对应的参数
  \caption{Access policy tree for the CP-ABE encryptor,$\land$ represents logical AND, while $\lor$ represents logical OR. An example of authorized access is highlighted in red: Deployers meet $Institution=Hospital$ $and$ $Licence=True$ are allowed to use the Health task of the LLM. They can access keys such as \{$Key\_H$,$Key\_(H \cap C)$,$...$\} to decrypt the corresponding neurons.}
  \label{fig-ac-tree}
  % \vspace{-10pt}
\end{figure}
% CP-ABE Encryptor（）用于密钥管理，不直接参与对模型的加密，是实现灵活的访问控制的核心。包括CP-ABE初始化，AES密钥和策略生成，CP-ABE加密，用户私钥生成四个核心步骤:

% \vspace{-10pt}
\textbf{AES Encryptor} \textit{(Execution Layer).}
The AES Encryptor is used for encrypting each neuron in the LLM model by AES-CTR (Lines 13-16 in Algorithm \ref{alg:enc}).
\begin{algorithm}[h!]
% \SetAlgoLined
% \setlength{\algomargin}{-0.5cm} % 减少左右边距
\LinesNumbered
\KwData{Encrypt Key: $k$, Original Model: $\mathcal{M}$, Neuron Index: $n$, AES Cryptor $E2$}
\KwResult{Encrypted Neuron parameters: $\mathcal{C}^\mathcal{M}_n$}
$\mathcal{C}^\mathcal{M}_n \leftarrow \mathcal{M}_n$\;
Init CTR Mode: $E2.MODE \leftarrow CTR,$ $E2.COUNTER\leftarrow n$\;
Encrypt $W_{IN}$: $\mathcal{C}^\mathcal{M}_n.W^{IN}_n=E2.encrypt(k,\mathcal{M}_n.W_{IN})$\;
Encrypt $W_{OUT}$: $\mathcal{C}^\mathcal{M}_n.W^{OUT}_n=E2.encrypt(k,\mathcal{M}_n.W_{OUT})$\;
Encrypt $B_{IN}$: $\mathcal{C}^\mathcal{M}_n.B^{IN}_n=E2.encrypt(k,\mathcal{M}_n.B_{IN})$\;
\Return $\mathcal{C}^\mathcal{M}_n$
\caption{$\textsc{SecNeuron}.encrypt()$}
\label{algm:enc-neuron}
\end{algorithm}

Given a neuron $n$ and its input $x_n$, the neuron feedforward process (MLP layer) can be summarized as:
\begin{equation}
    x_n' = \mathcal{M}_n.W_{OUT}\cdot \sigma(\mathcal{M}_n.W_{IN}\cdot x_n+\mathcal{M}_n.B_{IN}),
\end{equation}
where $W_{IN}$ and $B_{IN}$ are the input weight and bias, respectively, $\sigma(\cdot)$is the activation function (e.g., ReLU or Sigmoid), $W_{OUT}$ represents the output weight. As shown in Algorithm \ref{algm:enc-neuron}, AES Encryptor simultaneously encrypts the parameters $\mathcal{M}_n.W_{OUT}$, $\mathcal{M}_n.W_{IN}$ and $\mathcal{M}_n.B_{IN}$ to encrypt a single neuron $n$ (with the same encrypt key).
Moreover, to ensure that the decryption side can randomly decrypt any neuron, we utilize the AES encryption in CTR mode ($E_2.encrypt(\cdot)$) and pass $n$ as the counter value.
% AES encryptor 同时加密x,y,z三个参数for加密一个神经元n
% 此外，为了保证解密端可以随机对任意神经元进行解密，我们使用AES的CTR模式进行加密，并将n传入作为计数器的值。
\subsection{Decryptor for Model Deployer}
\begin{algorithm}[t]
% \SetAlgoLined
\LinesNumbered
\KwData{Encrypted Model: $\mathcal{C}^\mathcal{M}$, Policy $\&$ Key Encrypted by CP-ABE: $\mathcal{C}^\mathcal{P//K}$, Attribute List: $\mathcal{A}$, Public key: $PK$}
\KwResult{Decrypted Model: $\mathcal{M^\mathcal{A}}$}
Create CP-ABE Cryptor $E_1$ and AES Cryptor $E_2$\;

$\blacktriangleright$ \textbf{CP-ABE Decryptor:} $\boldsymbol{Dec_{ABE}(PK,\mathcal{C}^\mathcal{P//K},\mathcal{A})\rightarrow \mathcal{K'}}$\\

Request attribute-based secret key $SK$ using $PK$ and $\mathcal{A}$\;
Init authorized keys: $\mathcal{K'}\leftarrow \emptyset$\;
\ForEach{ciphertext $c \in \mathcal{C}^\mathcal{P//K}$}{
    Decrypt $c$ using $PK$ and $SK$ by $E_1$: $k_{gt}\leftarrow E_1.decrypt(PK,SK,c)$\;
    \If{$k_{gt}$ is correct decrypted}{
        Generate 64-bit AES key $k$ from $k_{gt}$\;
        $\mathcal{K'}\leftarrow \mathcal{K'}\cup k$\;
    }
}
$\blacktriangleright$ \textbf{AES Decryptor:} $\boldsymbol{Dec_{AES}(\mathcal{K'},\mathcal{C}^\mathcal{M})\rightarrow \mathcal{M}^\mathcal{A}}$\\

$\mathcal{M^\mathcal{A}}\leftarrow 0$\;
\ForEach{neuron $n$ of $\mathcal{\mathcal{C}^\mathcal{M}}$}{
    \ForEach{key $k \in \mathcal{K'}$}{
    Decrypt $\mathcal{C}^\mathcal{M}_n$ using $k$ by $E_2$: $m \leftarrow \textsc{SecNeuron}.decrypt(k,\mathcal{C}^\mathcal{M},n,E_2)$\;
    \If{$m$ is correct decrypted}{
        $\mathcal{M}^\mathcal{A}_n\leftarrow m$\;
        \textbf{break}\;
        % \Break
    }
    }
    $\blacktriangleright$ \textbf{Adaptive Pruner} \\
    \If{$\mathcal{M}^\mathcal{A}_n == 0$}{
    Prune neuron $n$ from $\mathcal{M^\mathcal{A}}$\;
    }
    
}
\Return $\mathcal{M^\mathcal{A}}$\;
\caption{Decryptor: $\textsc{SecNeuron}.Dec()$}
\label{alg:dec}
% \vspace{-8p}
\end{algorithm}
\textbf{CP-ABE Decryptor} \textit{(Policy Layer)}. 
% CP-ABE Decryptor基于用户属性获取授权的AES解密密钥
The CP-ABE Decryptor derives the authorized AES decryption key based on attributes of deployers (lines 2-9 in Algorithm \ref{alg:dec}). 
Firstly, the deployer requests an attribute-based secret key $SK$ based on their attributes $\mathcal{A}$. Then, $SK$ is used to decrypt $\mathcal{C}^\mathcal{P//K}$ to obtain the authorized $k_{gt}$ (the access policy is already embedded in the ciphertext, so the CP-ABE decryption function $E1.decrypt(\cdot)$ can be directly invoked without explicit permission checks). Finally, \textsc{SecNeuron} convert each correctly decrypted $k_{gt}$ to AES key $k$ for subsequent AES decryption.
% 首先,部署方基于自身属性请求attributed-based 密钥SK(Line 3 in ), 然后使用SK解密AP-ABE密文,获取有权解密的kgt(访问策略已经嵌入密文,直接调用CP-ABE的解密函数即可,无需显示的权限判断).最后基于正确解密的kgt生成64bit的AES密钥用于后续解密.

\textbf{AES Decryptor} \textit{(Policy Layer)}. AES Decryptor is used to decrypt model parameters. 
We provide two decryption mechanisms: \textit{Transmission-efficient decryption (T-E dec.)} and \textit{Computation-efficient decryption (C-E dec.)}.
Algorithm~\ref{alg:dec} uses \textit{T-E dec.} as an example.
% Aloorithm~\ref{alg:dec} 以 T-E dec为例
% secneuron提供了两种解密机制，分别是传输存储高效的解密的和计算高效的解密：

$\bullet$ \textit{Transmission-efficient decryption (T-E dec.):} Transmit only the encrypted LLM (identical size to original LLM) and CP-ABE ciphertext once, eliminating additional transmission overhead but requiring decryption of the entire LLM.

Since there is no metadata assistance, AES Decryptor attempts to decrypt each neuron using all authorized keys (Lines 10-17 in Algorithm \ref{alg:dec}), requiring verification of whether neurons have been correctly decrypted (Line 15 in Algorithm \ref{alg:dec}). AES encryption operates at the byte stream level, causing neurons of different data types to exhibit distinctive characteristics after encryption due to their varied byte-level serialization patterns in memory (Figure~\ref{fig-save}).
\begin{figure}
  \centering
  \includegraphics[width=\linewidth]{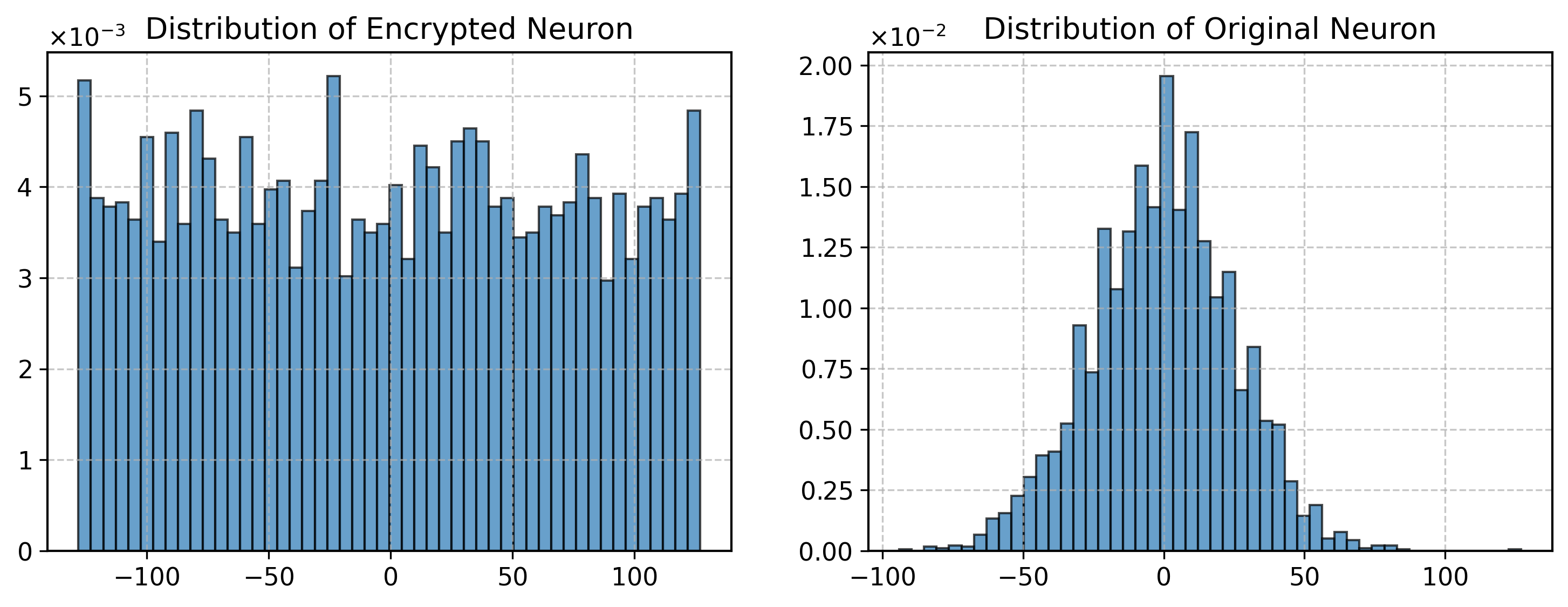}
  \caption{The distribution ($W_{IN}$) of encrypted neurons exhibits notable differences compared to original neurons. }
  % \Description{Workflow of TLAC. Model developers can use TLAC to implement task-level access control for their model before releasing it. TLAC needs to be executed only once, where a single encrypted model is generated and released. Model Deployers with different keys can decrypt specific model parameters to access corresponding task functionalities.}
  \label{fig-distribution}
\end{figure}
Consequently, we propose an efficient undecrypted neuron detection mechanism for two predominant parameter types in LLM: \textit{INT} and \textit{FLOAT}.
% The AES Decryptor attempts to decrypt each neuron using all authorized keys $\mathcal{K'}$ (lines 10-17 in Algorithm \ref{alg:dec}). Since there is no metadata assistance, directly verifying whether neurons have been correctly decrypted from the decryption results is extremely important.
% % 由于没有元数据辅助，直接从解密结果验证神经元是否被正确解密是非常重要的
% % The \textsc{SecNeuron}$.decrypt()$ process follows Algorithm \ref{algm:enc-neuron}, except that it calls the $E2.decrypt()$ for decryption.
% % More importantly, the AES Decryptor must verify whether each neuron $n$ has been correctly decrypted 在没有其他metadata辅助的情况下
% Since AES encryption operates at the byte stream level, neurons of different data types manifest distinct characteristics after encryption due to their varied byte-level serialization patterns in memory representation ( \cellcolor{gray!30}{as shown in Fig}).
% Thus, we introduce efficient verification methods specifically designed for two predominant parameter types in neural networks - \textit{INT} and \textit{FLOAT}.

\textit{FLOAT16.} Encryption of FLOAT16 parameters may affect the 'exponent' bits, leading to extremely large outlier values $(2^{16})$. Such anomalous values are absent in well-trained LLMs. Therefore, by detecting these outliers in the neuron parameters, we can determine whether it has been decrypted (the same with \textit{FLOAT32}).
    % 如图5所示，对FLOAT16类型的数进行流加密，会影响'exponent'位，出现非常大的异常值(2^16),这在一个正常的训练好的模型中是不存在的，因此我们可以检测神经元参数中较大异常值判断神经元是否被解密。
    
\textit{INT8.} The range of \textit{INT8} model parameters lies between [-128, 127] and remains invariant even after encryption, precluding the use of outliers for detection. After encryption, the data follows a uniform distribution, whereas the parameters of a trained model exhibit a specific, non-uniform distribution. Therefore, we can determine whether a neuron has been decrypted by its distribution pattern (Figure \ref{fig-distribution}). 

Specifically, for \textit{FLOAT-type} neurons, we select the maximum value of their input matrix $\mathcal{M}_n.W_{IN}$ as the metric ($m$); for \textit{INT-type} neurons, we use the variance of the statistical histogram of $\mathcal{M}_n.W_{IN}$ as the metric ($v_H$). The final determination of whether a neuron is correctly decrypted is made through threshold comparison. We tested the detection effectiveness across different models and achieved a success rate of 100\% in all cases (\S 6.5).

$\bullet$ \textit{Computation-efficient decryption (C-E dec.):} Transmit additional metadata indicating which key each neuron uses, only decrypting the authorized neurons.

When deploying, users download the LLM along with metadata $F$ that indicates the AES key used for each neuron. The size of $F$ equals the number of neurons with transmission overhead smaller than the LLM itself. During decryption, each neuron first retrieves its key based on $F$ and then determines whether it is authorized (whether CP-ABE can decrypt it). If authorized, the neuron is decrypted; otherwise, it undergoes adaptive pruning. When permissions change, only neurons with changed permissions need to be adjusted. For single-task decryption, this approach can reduce decryption overhead by 40\%. Table~\ref{tab:o} analyzes the computational and transmission complexities of different decrypt methods.

\begin{table}[]
\caption{Computational \& Transmission Complexity. }
\resizebox{\linewidth}{!}{
\begin{threeparttable}

\begin{tabular}{@{}ccccc@{}}

\toprule
\multirow{2}{*}{} & \multicolumn{2}{c}{First Deployment} & \multicolumn{2}{c}{Capability Update} \\ \cmidrule(l){2-5} 
 & Computational& Transmission& Computational & Transmission\\ \midrule
Encryptor & $O(\mathcal{M})$  & $O(\mathcal{M})$ & $O(1)$ & $O(1)$  \\
T-E Dec. & $O(|\mathcal{K}'|\cdot \mathcal{M})^*$ & $O(\mathcal{M})$ & $O(|\mathcal{K}'|\cdot \mathcal{M})^*$ & $O(1)$   \\
C-E Dec.$^\dagger$ & $O(\mathcal{M}_t)$ & $O(\mathcal{M}+\mathcal{N}_{neuron})$ & $O(\mathcal{M}_t)$  & $O(1)$ \\ \bottomrule
\end{tabular}

        \begin{tablenotes}
            \small
            \item 1: CP-ABE is used for encrypting and managing keys, with overhead significantly smaller than the processing of LLMs. We disregard its overhead to simplify the complexity analysis.
            % CP-ABE只对密钥进行加密和管理,为了简化复杂度的表示,我们忽略其计算和存储开销,相较于对大模型操作,非常小
            \item 2: $^*$ represents the worst-case complexity, $|\mathcal{K}'|$ refers to number of authorized keys.
            \item 3 $^\dagger$: $\mathcal{N}_{neuron}$ refers to the number of neurons, $\mathcal{N}_{neuron}<< \mathcal{M}$ (for a 6.7B LLM, the size of $\mathcal{N}_{neuron}$ is only around 500KB.); $\mathcal{M}_{t}$ refers to authorized portion of LLM for task $t$.
            % 翻译:对于6.7b的模型,N的大小仅有500KB左右
        \end{tablenotes}
\end{threeparttable}
}
\label{tab:o}
% \vspace{-10pt}
\end{table}
% \par \small Note: This table summarizes the example data. Further details can be found in Section 4.2.
% 部署方在下载大模型是同时下载metadata F用于标识每个神经元使用的加密密钥,F的大小为神经元的数量，传输开销远小于大模型。解密时，对于每个神经元基于F检索其密钥是否被授权，如果授权对其进行解密，否则对齐进行自适应剪枝。当权限变换时，只需要调整权限发生变化的神经元即可。对与单个任务解密而言，解密的开销可以降低70%

% 传输额外的metadata指示每个神经元使用的密钥，只需要使用授权密钥解密特定任务神经元

\textbf{Adaptive Pruner.} The Adaptive Pruner prunes all unauthorized neurons (lines 18-20 in Algorithm \ref{alg:dec}), accelerating the inference of locally deployed LLMs without affecting the performance on authorized tasks.
% 自适应剪枝器裁剪所有未被解密的神经元，不影响LLMs授权任务性能的同时，加速本地部署模型的推理。

\section{Security Analysis}
 In this section, we analyzed the security of \textsc{SecNeuron}.  First, we defined the Task Controllability Principle to ensure that all tasks can be effectively protected and proved the Task Capacity Upper Bound for a given LLM, a necessary but insufficient condition for judging whether a task configuration is reasonable (\S 5.1). Then, we proved that \textsc{SecNeuron} satisfies IND-CPA Security (\S 5.2) and Collusion Resistance Security (\S 5.3).

 % 首先我们定义了Neural Isolation Principle 确保所有任务都可以有效的保护，并且证明了模型的任务上界，一个必要不充分条件用于判断任务配置是否合理。接着我们证明了SecNeuron满足IND-CPA Security 和Collusion Resistance Security
 
 % First, we prove that \textsc{SecNeuron} satisfies IND-CPA Security (\S 5.1) and Task Isolation Security which ensures the security of unauthorized neurons under partial authorization scenarios (\S 5.2). Then we proposed and proved the Task Capacity Upper Bound for a given $\mathcal{M}$, a necessary but not sufficient condition ensuring that \textsc{SecNeurons} can secure any task in $\mathcal{T}$, limiting any single task to satisfy Equation~\ref{eq:limit}.
 % 然后我们提出了并证明了给定模型的任务容量上限，一个必要不充分条件确保secneuron可以secure任一任务in T,可以限制任意任务满足公式2.
 % (under partial authorization, the security of unauthorized neurons is guaranteed).
 \subsection{\deleted{Neural Isolation Principle}\added{Task Controllability Principle}}
 When serving as a separate unauthorised task, any task in $\mathcal{T}$ needs to satisfy Equation~\ref{eq:limit}, meaning each task requires a sufficient number of task-specific neurons with no intersection with other tasks. To achieve this goal, we define the Task Controllability Principle.

\textsc{Definition 5.1.}\textbf{(Task Controllability Principle)}
\textit{ Given a LLM $\mathcal{M}$ and tasks $\mathcal{T}$, for each task $t \in \mathcal{T}$, there exists a neuron set $S'_t \subseteq \mathcal{S}_t$ satisfying the following conditions:}

\textit{1) For any two different tasks $t_1, t_2 \in \mathcal{T}$ satisfying: $S'_{t_1} \cap S'_{t_2}=\emptyset$;}

\textit{2) Removing $S'_t$ causes the performance of task $t$ to fall below the target threshold: $\mathcal{P}_t(\mathcal{M} \setminus S'_t)<\delta_t,\mathcal{P}_t(\cdot)$ refers to performance on $t$.}

If satisfying the Task Controllability Principle, the task set $\mathcal{T}$ is controllable, meaning any task $t \in \mathcal{T}$ can be constrained to perform within its target threshold $\delta_t$. \added{Notably, the Task Controllability Principle should not be interpreted as an assumption of absolute physical separation for all neurons involved in a task, but rather as a controllability principle that defines how specific task capabilities can be effectively constrained. 
\textsc{SecNeuron} does not assume a priori that all neurons across different tasks are isolatable, but rather seeks to construct such effective isolation neuron sets.
Despite general neuron entanglement in large LLMs, their large-scale neural architectures allow us to readily pinpoint neuron subsets for each task(comprising approximately 15\%) that satisfy the Task Controllability Principle.}
However, for a given model $\mathcal{M}$, the number of tasks it can handle is not unlimited. We further propose and prove the Task Capacity Upper Bound theorem, which establishes a necessary but not sufficient condition to determine whether $\mathcal{M}$ can effectively manage all tasks in set $\mathcal{T}$.

\begin{theorem}
   (Task Capacity Upper Bound) For the LLM $\mathcal{M}$ and a task set $\mathcal{T}$ to satisfy the Task Controllability Principle, it is necessary to ensure: $\sum_{t \in \mathcal{T}} |C_t| \leq |\mathcal{M}|$.
\end{theorem}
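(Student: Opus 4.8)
The plan is to reduce the statement to an elementary cardinality argument about pairwise-disjoint subsets of a finite set, so the whole proof rests on invoking the hypothesis correctly rather than on any hard estimate. First I would unpack the assumption: supposing that $\mathcal{M}$ and $\mathcal{T}$ satisfy the Task Controllability Principle (Definition 5.1), I invoke it directly to obtain, for every task $t \in \mathcal{T}$, a neuron set $S'_t \subseteq \mathcal{S}_t$ whose removal drives the performance on $t$ below $\delta_t$, together with the crucial structural property (Condition 1) that the family $\{S'_t\}_{t \in \mathcal{T}}$ is pairwise disjoint. I would identify the quantity $C_t$ appearing in the theorem with such a controlling set for task $t$; if $C_t$ instead denotes the minimum-cardinality critical set considered in isolation, I note that every $S'_t$ furnished by the principle is itself a critical set for $t$, so $|C_t| \le |S'_t|$, and any bound on $\sum_t |S'_t|$ transfers to $\sum_t |C_t|$.

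The core step is then purely combinatorial. Since each $S'_t$ is a subset of the neuron set of $\mathcal{M}$ (because $S'_t \subseteq \mathcal{S}_t \subseteq \mathcal{M}$) and the sets are pairwise disjoint, additivity of cardinality over a disjoint union gives
\[
\left| \bigcup_{t \in \mathcal{T}} S'_t \right| = \sum_{t \in \mathcal{T}} |S'_t|.
\]
Because $\bigcup_{t \in \mathcal{T}} S'_t \subseteq \mathcal{M}$, monotonicity of cardinality yields $\sum_{t \in \mathcal{T}} |S'_t| \le |\mathcal{M}|$, and combining this with $|C_t| \le |S'_t|$ establishes $\sum_{t \in \mathcal{T}} |C_t| \le |\mathcal{M}|$, exactly the claimed necessary condition. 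For intuition I would also state the contrapositive pigeonhole reading: if $\sum_t |C_t| > |\mathcal{M}|$, then no pairwise-disjoint controlling family can fit inside the neurons of $\mathcal{M}$, so the Task Controllability Principle must fail.

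Because the argument carries no analytic content, the main obstacle is definitional rather than technical. I must be precise that $|\mathcal{M}|$ counts neurons, the same objects populating each $S'_t$, rather than parameters, so the counting is internally consistent; and I must ensure that $C_t$ genuinely realizes the controllability requirement of Definition 5.1, since the disjointness hypothesis is what makes the union additive. The one point warranting explicit care is therefore the identification step $|C_t| \le |S'_t|$: I would argue it by the minimality characterization of $C_t$ against the fact that each $S'_t$ is a valid (though possibly non-minimal) critical set, which validates the inequality and closes the proof.
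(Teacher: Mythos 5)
Your proposal is correct and follows essentially the same route as the paper's proof: invoke the disjointness and containment guaranteed by the Task Controllability Principle to get $\sum_t |S'_t| = |\bigcup_t S'_t| \le |\mathcal{M}|$, then use the minimality of $C_t$ (since each $S'_t$ is a valid though possibly non-minimal critical set) to conclude $\sum_t |C_t| \le \sum_t |S'_t| \le |\mathcal{M}|$. The only difference is that the paper inserts an intermediate minimal set $S^{min}_t \subseteq S'_t$ before comparing to $C_t$, a step your direct appeal to minimality renders unnecessary.
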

See Appendix~\ref{app-proof} for the proof. $|\cdot|$ refers to the number of neurons, $C_t$ is the minimal critical neuron set for task $t$, defined as the smallest set of neurons whose removal causes the model's performance to fall below  $\delta_t$. In practical implementations, this can be approximated using a greedy strategy, where neurons are pruned in descending order of importance until the target threshold is met. 
% 如果满足Task Controllability Principle, 则T是可控的，即任意的t都可以限制到目标性能以内。for a given M, the number of tasks in T is not unlimited. proposed and proved the Task Capacity Upper Bound, a necessary but not sufficient condition to determine whether M can handle all tasks in T.
 % 为了实现这个目标，我们定义了神经元隔离原则：
 \subsection{IND-CPA Security}
\begin{theorem}
If CP-ABE and AES-CTR schemes utilized in \textsc{SecNeuron} are IND-CPA secure, then \textsc{SecNeuron} is IND-CPA secure.
\end{theorem}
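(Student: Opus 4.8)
The plan is to prove IND-CPA security of \textsc{SecNeuron} via a standard hybrid (game-hopping) argument that reduces any successful adversary against the composite scheme to an adversary against one of its two cryptographic building blocks. The key structural observation is that in \textsc{SecNeuron} the two primitives are used in a clean cascade: CP-ABE encrypts only the target-group elements $k_{gt}$ (from which the AES keys $k$ are derived), while AES-CTR encrypts the actual neuron parameters of $\mathcal{M}$ under those derived keys. Crucially, the plaintext neuron data is protected by AES-CTR whose key is itself protected by CP-ABE, so an adversary who cannot recover $k_{gt}$ learns nothing about $k$, and hence the AES ciphertexts are indistinguishable. I would therefore set up the IND-CPA experiment for \textsc{SecNeuron} in which the adversary submits two equal-length model/task/policy tuples, receives $(\mathcal{C}^\mathcal{M}, \mathcal{C}^\mathcal{P//K})$ for a random challenge bit $b$, and must guess $b$, and argue indistinguishability through a sequence of hybrids.

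The concrete sequence of steps I would carry out is as follows. First I define $\mathrm{Game}_0$ as the real IND-CPA experiment against \textsc{SecNeuron}. Then in $\mathrm{Game}_1$ I replace every CP-ABE ciphertext component $c \leftarrow E_1.encrypt(PK, k_{gt}, p)$ with an encryption of an independent random target-group element, using the policy-constrained admissibility condition (the adversary may only query secret keys whose attributes fail every challenge access policy) so that the replacement is justified by the IND-CPA security of CP-ABE; a hybrid over the polynomially many subsets $\mathcal{N} \in \mathcal{U}$ bounds the advantage gap by a polynomial multiple of the CP-ABE advantage. After this step the derived AES keys $k$ are, from the adversary's view, information-theoretically independent of the challenge. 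Next, in $\mathrm{Game}_2$ I replace each AES-CTR neuron ciphertext with the encryption of a fixed dummy plaintext (or equivalently with random bytes, since AES-CTR is a stream cipher), invoking the IND-CPA security of AES-CTR; again a hybrid over the neurons bounds this gap. In the final game the challenge ciphertext is statistically independent of $b$, so the adversary's advantage is exactly zero, and summing the per-hop bounds yields
\begin{equation}
\mathrm{Adv}^{\mathrm{IND\text{-}CPA}}_{\textsc{SecNeuron}}(\mathcal{A}) \le |\mathcal{U}| \cdot \mathrm{Adv}^{\mathrm{IND\text{-}CPA}}_{\mathrm{CP\text{-}ABE}}(\mathcal{B}_1) + |\mathcal{M}| \cdot \mathrm{Adv}^{\mathrm{IND\text{-}CPA}}_{\mathrm{AES\text{-}CTR}}(\mathcal{B}_2),
\end{equation}
which is negligible whenever both primitives are IND-CPA secure.

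The main obstacle I anticipate is making the reduction in $\mathrm{Game}_1$ rigorous while respecting the key-derivation step: the AES key $k$ is not a uniformly independent secret but is deterministically generated from $k_{gt}$ (the $64$-bit AES key is derived from the target-group element). I must argue that once the CP-ABE challenge replaces $k_{gt}$ with an independent random group element, the induced distribution on $k$ is independent of the challenge bit $b$ — this requires that the key-derivation function behave as a suitable extractor or at least preserve independence, and I would either model it as a random oracle or assume it is a secure key-derivation function so that $k$ remains pseudorandom and unknown to the adversary. A secondary subtlety is the bookkeeping of admissible key queries: I must ensure the layered policy tree (Neuron-level and User-level) and the common-neuron policy do not give any queried attribute set a key that satisfies a challenge policy, so that every CP-ABE hop is a legitimate reduction. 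Modulo these two points, the remaining steps are routine hybrid arguments and do not require detailed calculation.
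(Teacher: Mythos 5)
Your proposal is correct and rests on the same underlying reduction as the paper --- if an adversary distinguishes \textsc{SecNeuron} ciphertexts, it must break either the CP-ABE layer or the AES-CTR layer --- but the two arguments differ substantially in how much work they actually do. The paper's proof is a three-line contradiction sketch: it asserts that a successful adversary ``can be used'' to break one of the two primitives without defining the IND-CPA experiment for the composite scheme, without constructing the reduction, and without saying anything about how the AES keys are derived from the CP-ABE-protected group elements. Your version supplies exactly the missing structure: an explicit challenge experiment, a hybrid over the subsets $\mathcal{U}$ to invoke CP-ABE security, a second hybrid over neurons to invoke AES-CTR security, and a concrete advantage bound $|\mathcal{U}|\cdot \mathrm{Adv}_{\mathrm{CP\text{-}ABE}} + |\mathcal{M}|\cdot \mathrm{Adv}_{\mathrm{AES\text{-}CTR}}$. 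Most importantly, you identify the one genuine technical gap in the paper's argument --- the deterministic derivation of the AES key $k$ from $k_{gt}$ is not covered by IND-CPA of either primitive alone and needs a KDF/random-oracle assumption to guarantee that $k$ remains pseudorandom once $k_{gt}$ is replaced --- which the paper silently skips. One refinement worth making: in \textsc{SecNeuron}'s actual threat model the adversary is often \emph{partially} authorized, so your Game$_1$ admissibility condition (attributes failing every challenge policy) should be relaxed to replace only the ciphertexts of unauthorized subsets, with the challenge plaintexts constrained to agree on authorized components; as stated your hybrid proves security only against the fully passive attacker. Modulo that bookkeeping, your argument is the rigorous version of what the paper merely asserts.
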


% Proof
\begin{proof}
Assume that SecNeuron is not IND-CPA secure. This means there exists an adversary \textbf{A} who can distinguish between ciphertexts of two plaintexts with a non-negligible advantage.

\textsc{SecNeuron} consists of two cryptographic components:
1)CP-ABE encryptor for task-specific keys under access policies;
2)AES-CTR encryptor for neurons using keys derived from the CP-ABE.

If \textbf{A} successfully distinguishes ciphertexts, we can construct a reduction that breaks either:
1)The IND-CPA security of CP-ABE (by using \textbf{A} to distinguish CP-ABE encrypted task keys), or
2)The IND-CPA security of AES-CTR (by using \textbf{A} to distinguish AES-CTR encrypted neurons).

Either case contradicts our assumption that both schemes are IND-CPA secure. Therefore, SecNeuron must be IND-CPA secure.
% Assume that \textsc{SecNeuron} is not IND-CPA secure. This means there exists an adversary \textbf{A} who can distinguish ciphertexts of two plaintexts and thus break the IND-CPA security.

% \textsc{SecNeuron} consists of two main cryptographic components:
% CP-ABE encryptor: each task-specific key is encrypted using CP-ABE under a predefined access policy.
% AES-CTR encryptor of neurons: neurons are encrypted with keys generated from the CP-ABE encryptor via AES-CTR.
% If the attacker distinguishes ciphertexts, it would either break the IND-CPA security of CP-ABE (through CP-ABE encryptor) or AES-CTR (through AES-CTR). Both cases contradict the assumption that CP-ABE and AES-CTR are IND-CPA secure.
% Therefore, \textsc{SecNeuron} is IND-CPA secure.
\end{proof}
\subsection{Collusion Resistance Security}
\begin{theorem}
     If CP-ABE (including its $G_T$ group) and AES-CTR encryption schemes are secure and the Task Controllability Principle is satisfied, then SecNeuron is resistant to collusion attacks.
\end{theorem}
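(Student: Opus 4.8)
The plan is to argue by contradiction through a two-layer reduction that mirrors the hybrid encryption structure, using the Task Controllability Principle to convert the capability-level statement into a concrete decryption requirement. First I would formalize what a successful collusion attack is: a coalition $\{U_1,\dots,U_m\}$ whose combined attribute set $\bigcup_i \mathcal{A}_i$ fails to satisfy the access policy $p_t$ guarding an unauthorized task $t$, yet which, by pooling their personalized secret keys $SK_i$, produces a deployed model $\mathcal{M}^{\mathcal{A}}$ with $\mathcal{P}_t(\mathcal{M}^{\mathcal{A}}) \ge \delta_t$. The goal is to show this event occurs only with negligible probability, and that the sole excluded case is the one already carved out in \S 5.3, where $\bigcup_i \mathcal{A}_i$ satisfies every policy.

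The bridge from capability to cryptography is the Task Controllability Principle. By its condition (2), task $t$ admits a critical set $S'_t$ whose removal forces $\mathcal{P}_t(\mathcal{M}\setminus S'_t) < \delta_t$; hence any model attaining performance at least $\delta_t$ on $t$ must have correctly recovered the plaintext parameters of essentially every neuron in $S'_t$. By condition (1) together with the disjoint-subset decomposition $\mathcal{U}$ of Algorithm~\ref{alg:enc}, these neurons are disjoint from every other task's critical set, so they are never placed in an OR-shared overlap subset or in the common subset that some unrelated authorization could open; they therefore live in a subset whose guarding element $k_{gt}$ is encrypted under a CP-ABE policy that demands authorization for task $t$ itself. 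Consequently, a successful collusion attack implies the coalition obtained the AES plaintext of the neurons in $S'_t$ without $\bigcup_i \mathcal{A}_i$ satisfying $p_t$.

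Next I would reduce this to the two underlying primitives. Either the coalition recovered the neuron plaintexts without the corresponding AES key, which yields a distinguisher contradicting the IND-CPA security of AES-CTR (Theorem~2), or it recovered the AES key, equivalently the $G_T$-element $k_{gt}$, from the CP-ABE ciphertext without an attribute set satisfying $p_t$. The latter is precisely a CP-ABE collusion attack: the coalition combines several secret keys $SK_i$, each issued independently for $\mathcal{A}_i$, to open a ciphertext that none of them nor their union is entitled to. Here I would invoke the emphasized role of the $G_T$ group: each $k_{gt}$ is drawn independently and uniformly from $G_T$, so the keys the coalition legitimately decrypts for its authorized subsets carry no information about the $k_{gt}$ guarding $S'_t$, and recovering the latter therefore reduces to defeating the CP-ABE ciphertext directly, which combined unauthorized keys cannot do by CP-ABE's collusion resistance. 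Both branches contradict the security hypotheses, closing the argument.

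I expect the main obstacle to be making the first reduction rigorous rather than heuristic, namely arguing that attaining $\mathcal{P}_t(\mathcal{M}^{\mathcal{A}}) \ge \delta_t$ genuinely forces recovery of $S'_t$, rather than allowing the adversary to reconstruct the lost capability from the decrypted authorized neurons together with public knowledge of the architecture. This is exactly where the disjointness clause does the essential work: it guarantees that $S'_t$ is unreachable through any other task's key or any shared subset, so opening it is not merely hard but policy-impossible, leaving breaking AES-CTR or breaking CP-ABE as the only avenues. The remaining steps are routine contrapositive packaging of the two reductions, after which the out-of-scope scenario of a fully-authorized coalition is excluded by appeal to the threat model.
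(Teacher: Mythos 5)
Your proposal is correct and follows essentially the same route as the paper's proof: the same three collusion avenues (deriving an unauthorized AES key, breaking CP-ABE/AES-CTR directly, and exploiting coupled neurons) are each closed by the same ingredients — independence of the $G_T$-sampled keys, the assumed security of the two primitives, and the disjoint critical set $S'_t$ guaranteed by the Task Controllability Principle. Your version merely repackages the paper's case enumeration as a single reduction chain and, usefully, makes explicit the informal bridge (that achieving performance $\ge \delta_t$ forces recovery of $S'_t$) that the paper leaves implicit in its appeal to condition~(2) of the principle.
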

\begin{proof}
    Assume that \textsc{SecNeuron} is not resistant to collusion attacks, which implies that a group of attackers (multiple users with different authorized tasks) can collude to use LLMs for tasks that none of them individually has permission to access.
    Collusion attacks may take the following forms:
1)combine their respective keys to derive an AES key for an unauthorized task;
2)combine their privileges to break the encryption of another unauthorized task;
3)leverage accessibly coupled neurons to perform unauthorized tasks. For example, consider an attack group authorized for tasks $t1$ and $t3$. They have access to the coupled neurons \{$\mathcal{S}_{t1}\cap \mathcal{S}_{t2}$, $\mathcal{S}_{t2}\cap \mathcal{S}_{t3}$,$\mathcal{S}_{t1}\cap \mathcal{S}_{t2} \cap \mathcal{S}_{t3}$\}, and they wish to leverage these neurons to perform the unauthorized task $t2$.

For 1), \textsc{SecNeuron} randomly selects keys from the $G_T$ group of CP-ABE, ensuring independence and unpredictability between task keys. This means that even if attackers obtain keys for multiple authorized tasks, they cannot derive any other keys because there is no mathematical correlation between different keys.

For 2), attackers would need to break either the CP-ABE or AES-CTR encryptor. Even if attackers have access privileges to multiple authorized tasks, according to the security of CP-ABE, they cannot decrypt ciphertexts that do not satisfy their access structures.

For 3), due to the Task Controllability Principle, for any task $t$ there must exist a non-overlapping neuron set $S'_t$ that cannot be accessed through coupled keys, and $S'_t$ is sufficient to render $t$ unusable.

Any successful collusion would contradict our security assumptions. Hence, \textsc{SecNeuron} must be resistant to collusion attacks. 
% 另一种潜在的共谋可能为拥有t1权限的用户和拥有t2权限的用户可以共同使用t1和t2任务，然而这并不是共谋攻击，而是策略允许的协作。若需禁止此类操作，应在访问策略中明确限制任务的组合规则，而非依赖加密机制实现逻辑隔离。
% For 3),由于Task Controllability Principle，对任意任务一定存在一个不重叠的子集S't无法通过耦合密钥访问，且该部分神经元足够使得任务t不可用。
% 结合可访问的耦合神经元完成未授权的任务
\end{proof}

\deleted{Another potential collusion scenario involves users with authorized $t_1$ only and users with authorized $t_2$ only collaborating to utilize both $t_1$ and $t_2$ tasks jointly. However, this does not constitute a collusion attack but rather represents collaboration that is permitted by the policy. If such operations need to be prohibited, the restriction should be explicitly defined in the access control policy through clear combination rules for tasks, rather than relying solely on \textsc{SecNeuron} to implement logical isolation .}
\added{Another potential collusion scenario involves users with authorized $t_1$ only and users with authorized $t_2$ only collaborating to utilize both $t_1$ and $t_2$ tasks jointly. Even worse, a group of attackers who collectively possess authorization for all tasks can collude and recover the original model. However, it extends beyond \textsc{SecNeuron}'s primary threat model, which focuses on preventing users from accessing functionalities for which they lack authorization. If such operations need to be prohibited, the restriction should be explicitly defined in the access control policy or combined with methods such as TEE, rather than relying solely on \textsc{SecNeuron} to implement logical isolation (\S 8).}

\begin{table}[]
\caption{Summary of Tasks and Datasets.}
\resizebox{.92\linewidth}{!}{
\begin{threeparttable}
\begin{tabular}{@{}ccc@{}}
\toprule
Task & Dataset on Hugging Face & Train Data Extraction \\ \midrule
Code   &  \textit{codeparrot/github-code-clean}       &        \makebox[1em][c]{\ding{56}}               \\
Health&  \textit{enelpol/rag-mini-bioasq-qas-clean}      &        \makebox[1em][c]{\ding{56}}               \\
Email     &  \textit{LLM-PBE/enron-email}       &              PII Extraction         \\
Story     &  \textit{roneneldan/TinyStories}       &         \makebox[1em][c]{\ding{56}}              \\
Math     &  \textit{camel-ai/physics}       &                  \makebox[1em][c]{\ding{56}}     \\
Arxiv     &    \textit{haritzpuerto/the\_pile\_00\_arxiv}     &   Membership Inference                    \\ 
ImageNet$^*$     &    \textit{ILSVRC/imagenet-1k}     &   \makebox[1em][c]{\ding{56}}                    \\\bottomrule
\end{tabular}
        \begin{tablenotes}
            \small
            \item 1: ImageNet is divided into 4 subcategories, serving as 4 distinct tasks (\textit{Animals, Plants \& landscapes, Food, Transportation}).
            % x的数据被分成5个子类,分别作为5个不同的任务(Animals,Plants & landscapes,Food,Transportation,Clothes)
        \end{tablenotes}
\end{threeparttable}
}
\label{tab:datasets}
% \vspace{-10pt}
\end{table}

\section{Evaluation}
\begin{figure*}[h]
  \centering
  
    \begin{subfigure}{\textwidth}
        \centering
        \includegraphics[width=\textwidth]{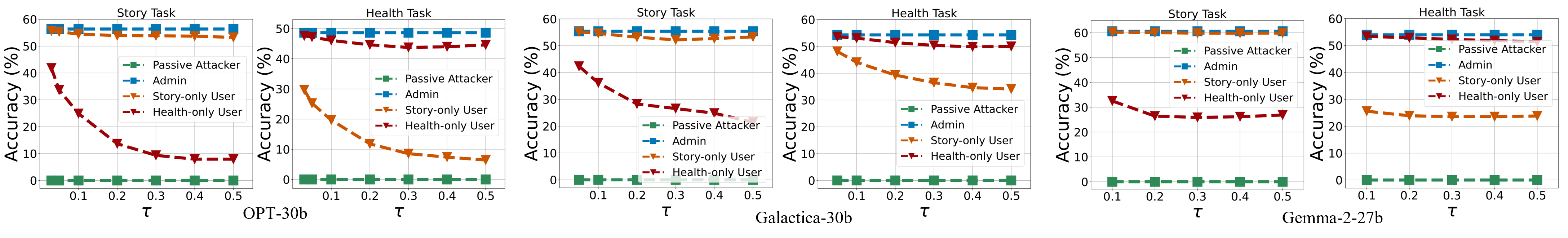}
        \caption{LLMs Configured with Story and Health Tasks.}
        \label{fig:subfig1}
    \end{subfigure}
    
    \vspace{0pt} % Space between subfigures
    
    \begin{subfigure}{\textwidth}
        \centering
        \includegraphics[width=\textwidth]{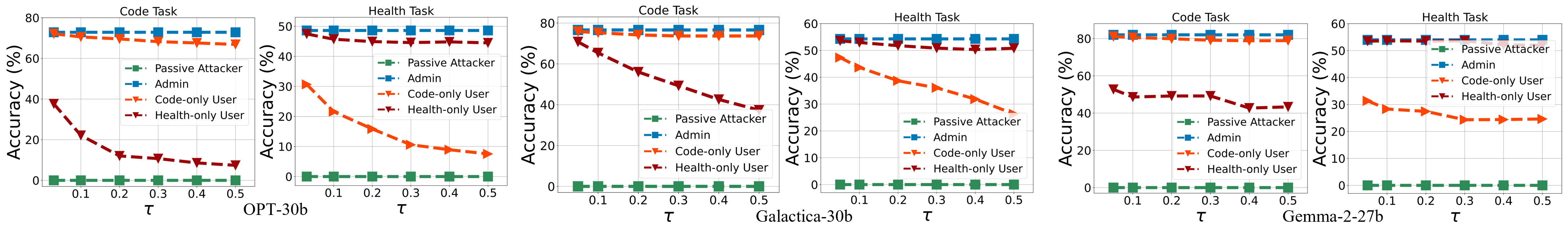}
        \caption{LLMs Configured with Code and Health Tasks}
        \label{fig:subfig2}
    \end{subfigure}
    \caption{Effectiveness of Task-Level Capabilities Control: limiting unauthorized tasks while preserving authorized ones. \added{Admin is equivalent to the baseline model performance without \textsc{SecNeuron}. Notably, although some LLMs demonstrate high accuracy on unauthorized code tasks (primarily due to elevated baseline performance), they can no longer effectively complete coding work (Figure~\ref{fig-example-code}).}}
    % 尽管某些模型未授权的code任务精度偏高（主要是由于baseline的精度较高），但他们已经不能很好的完成code任务
  \label{fig-two-result}
  \vspace{-10pt}
\end{figure*}
\subsection{Implementation}
We implemented \textsc{SecNeuron} based on the Charm~\cite{charm-crypto} (CP-ABE Cryptor) and Crypto~\cite{simpkins_crypto} (AES Cryptor) libraries.
\textsc{SecNeuron} uses Cython~\cite{cython_org} to accelerate loop operations for Python, all stream encryption is performed on the CPU and supports parallel operations. 
Additionally, we use mean absolute activation to evaluate neuron importance, but this is not necessarily the optimal choice. Any other efficient mechanism for quantifying neuron importance can be used to enhance the effectiveness of \textsc{SecNeuron}.

\subsection{Experimental Setup}
% Please add the following required packages to your document preamble:
% \usepackage{booktabs}
\textbf{Datasets \& Tasks.} We evaluated \textsc{SecNeuron} across multiple datasets from different domains, with each dataset corresponding to a specific domain task (as summarized in Table~\ref{tab:datasets}), including Code, Story, Email, Health, and Arxiv. Notably, the Email dataset was also used to test PII extraction following~\cite{lukas2023analyzing}, while the Arxiv dataset was used to assess membership inference attacks following~\cite{puerto2024scalingmembershipinferenceattacks}.
% 我们在多个不同领域的数据集上测试了\SecNeuron的有效性，每个数据集对应一个领域任务（如在表x中总结的），包括 Code、Story、Email、Health、Arxiv.其中Email数据集也用于测试PII Extraction fellow [], Arxiv用于测试成员推断攻击fellows []~\cite{lukas2023analyzing} and membership inference~\cite{puerto2024scalingmembershipinferenceattacks}

\textbf{LLMs.} We tested various LLMs with different architectures and parameter scales, including OPT~\cite{zhang2022optopenpretrainedtransformer} (OPT-6.7b, OPT-30b), Galactica~\cite{taylor2022galacticalargelanguagemodel} (Galactica-6.7b, Galactica-30b), and Gemma-2~\cite{gemmateam2024gemma2improvingopen} (Gemma-2-9b, Gemma-2-27b).
Furthermore, we selected the image-based model Vit-Base-Patch16~\cite{dosovitskiy2021imageworth16x16words} to demonstrate the wide-ranging applications of \textsc{SecNeuron.} It is important to note that \textsc{SecNeuron} functions primarily as an encryption mechanism, independent of specific models or importance selection methods.

\subsection{Overall Performance}
We validate \textsc{SecNeuron} at task level and data level:
% 我们分别从任务层面和训练数据层面验证secneuron的有效性，在任务层面，secneuron可以有效的限制LLM在未授权任务上的能力且几乎不影响授权的任务；在数据层面，secneuron可以有效地抵抗PII Extraction attack and MIA for Unauthorized Datasets

\textbf{Task Level.} 
\deleted{\textsc{SecNeuron} effectively limits LLM capabilities on unauthorized tasks without significantly compromising authorized tasks.}
% 如果没有特殊说明，我们所将有的任务都视为预测任务，使用Accuracy评估模型模型的性能，对于任务t，其精度计算如公式x所示：
\added{
\textsc{SecNeuron} aims to dismantle the capability itself. If the model cannot even predict the correct tokens for a task, it demonstrates a more fundamental incapacitation than simply outputting a refusal message.}
Thus, unless otherwise specified, all tasks are considered prediction tasks, and the performance is evaluated using $Accuracy$. For a given task $t$, its accuracy is calculated by Equation \eqref{eq:accuracy}:
\begin{equation}
    Accuracy_t = \frac{\sum_{x \in D'_t} \text{CorrectTokens}(x)}{\sum_{x \in D'_t} \text{TotalTokens}(x)}, \label{eq:accuracy}
\end{equation}
$D'_t$ represents the test dataset for task $t$. 
\deleted{If no test dataset is available, the training dataset is used instead, skipping a certain number of tokens.} 
Disabling $Accuracy$ of a specific task to $0$ is almost impossible because LLMs are trained on vast amounts of textual data and possess a general ability to predict the next token. Therefore, the task is considered unusable when the $Accuracy$ of a task $t$ falls below a threshold $\delta_t$ \deleted{(determined based on task requirements)}. As shown in Figure~\ref{fig-example-code}, even though $Accuracy_{Code}$ remains above $35\%$, it is no longer capable of generating meaningful code.

\added{\textsc{SecNeuron} effectively limits LLM capabilities on unauthorized tasks without significantly compromising authorized tasks.} Figure~\ref{fig-two-result} evaluates the effectiveness of \textsc{SecNeuron} across two task setting LLM (Code VS Health and Story VS Health) with four permission levels: Admin (full access to all tasks), $[Task]$-only User (Partially Authorized Deployers with access limited to a specific task $[Task]$), and Passive Attackers (without any permissions).

$\bullet$ For Admin users, the decrypted LLM maintains full accuracy across all tasks without any performance degradation, effectively preserving the model's utility.

$\bullet$ For Passive Attackers, \textsc{SecNeuron} provides robust protection, resulting in 0\% accuracy across all tasks. Passive Attackers would need to perform an exhaustive search of $2^{128}$ (length of AES key) combinations to gain access to any task of LLM.

$\bullet$ For $[Task]$-only Deployer, the partially decrypted LLM maintains accuracy within $2\%$ of the original performance on authorized tasks. Conversely, accuracy decreases by more than $40\%$ or falls below $25\%$ ($10\%$ for OPT) for unauthorized tasks, limiting the model's capabilities on unauthorized tasks and mitigating potential abuse. Furthermore, attempting to recover capabilities for other tasks would also require an exhaustive search of $2^{128}$ possibilities due to Collusion Resistance Security.
\begin{figure}
  \centering
  \includegraphics[width=\linewidth]{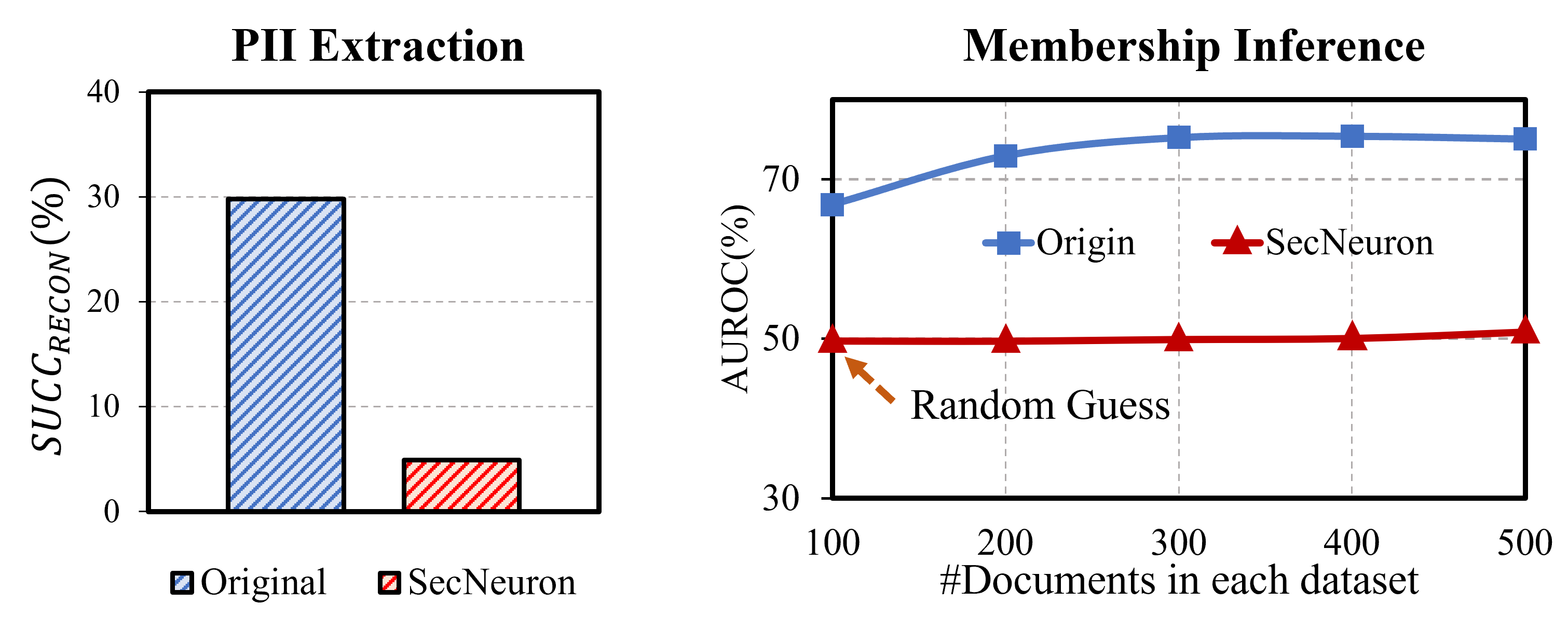}
  \vspace{-20pt}
  \caption{Effectiveness in Preventing Data-Level Abuse.}
  \vspace{-10pt}
  % \Description{Workflow of TLAC. Model developers can use TLAC to implement task-level access control for their model before releasing it. TLAC needs to be executed only once, where a single encrypted model is generated and released. Model Deployers with different keys can decrypt specific model parameters to access corresponding task functionalities.}
  \label{fig-data}
\end{figure}

\textbf{Data Level.} \textsc{SecNeuron} successfully defends against PII Extraction attacks and MIA for unauthorized datasets.

$\bullet$\textit{PII Extraction attacks.}
We utilized OPT-6.7b as the target LLM, setting Story as the authorized task while treating Email and its associated dataset as unauthorized content. For evaluation, we employed PII inference described in ~\cite{lukas2023analyzing} and adopted $Succ_{RECON}$ as assessment metric, where higher accuracy values indicate more severe leakage of training data. As illustrated in Figure~\ref{fig-data}, \textsc{SecNeuron} effectively reduces the $Succ_{RECON}$  from about $30\%$ to below $5\%$.

$\bullet$\textit{Membership Inference Attacks.}
We also employed OPT-6.7b as the target LLM, setting Story as the authorized task while treating arXiv and its associated dataset as unauthorized content. For evaluation, we implemented collection-level MIA described in~\cite{puerto2024scalingmembershipinferenceattacks} as Membership Inference Attacks and adopted $AUROC$ as assessment metric, where higher values indicate more successful attacks. As illustrated in Figure~\ref{fig-data}, \textsc{SecNeuron} effectively reduces the MIA $AUROC$ to approximately 50\%, essentially rendering the attack equivalent to random guessing.
\begin{table*}[]
\caption{Multi-task Effectiveness with Dynamic Permissions: Selective Decryption Based on one Single Encrypted Model}
\resizebox{.96\linewidth}{!}{
\begin{threeparttable}
\begin{tabular}{@{}cccccccccccc@{}}
\toprule
\multicolumn{1}{c}{\multirow{2}{*}{Permissions List}} & \multicolumn{5}{c}{OPT-6.7b (Accuracy)}     &\multicolumn{1}{c}{\multirow{2}{*}{Permissions List}}&  \multicolumn{5}{c}{Gemma-2-27b (Accuracy)}  \\ \cmidrule(l){2-6}  \cmidrule(l){8-12} 
\multicolumn{1}{c}{}  & Health & Email & Code & Math & Story &\multicolumn{1}{c}{}& Health & Email & Code & Math & Story \\ \cmidrule(l){1-6} \cmidrule(l){7-12}
 $[$\makebox[1em][c]{\ding{52}}| \makebox[1em][c]{\ding{52}}| \makebox[1em][c]{\ding{52}}| \makebox[1em][c]{\ding{52}}| \makebox[1em][c]{\ding{52}}$]$ $^\dagger$     &    47.21\%  & 60.74\%    &   71.71\%   &   65.76\%  &  55.89\%  &  $[$\makebox[1em][c]{\ding{52}}| \makebox[1em][c]{\ding{52}}| \makebox[1em][c]{\ding{52}}| \makebox[1em][c]{\ding{52}}| \makebox[1em][c]{\ding{52}}$]$ $^\dagger$  &     53.99\%     &  63.03\%     &   81.99\%   &  86.35\%    &    60.08\%    \\
  $[$\makebox[1em][c]{\ding{56}}| \makebox[1em][c]{\ding{52}}| \makebox[1em][c]{\ding{52}}| \makebox[1em][c]{\ding{52}}| \makebox[1em][c]{\ding{52}}$]$ $\,\,\,$  &  \cellcolor{gray!30}{25.66\%}      &   60.57\%    &   71.41\%   &   65.54\%   &  55.42\%   & $[$\makebox[1em][c]{\ding{56}}| \makebox[1em][c]{\ding{52}}| \makebox[1em][c]{\ding{52}}| \makebox[1em][c]{\ding{52}}| \makebox[1em][c]{\ding{52}}$]$ $\,\,\,$&   \cellcolor{gray!30}{28.45\%}      &   63.23\%    &  80.89\%    &  86.51\%    &   60.51\%         \\
   $[$\makebox[1em][c]{\ding{52}}| \makebox[1em][c]{\ding{56}}| \makebox[1em][c]{\ding{56}}| \makebox[1em][c]{\ding{52}}| \makebox[1em][c]{\ding{52}}$]$ $\,\,\,$ &    46.28\%    &   \cellcolor{gray!30}{28.88\%}     &     \cellcolor{gray!30}{25.00\%}  & 62.06\%     & 55.31\%      & $[$\makebox[1em][c]{\ding{56}}| \makebox[1em][c]{\ding{56}}| \makebox[1em][c]{\ding{56}}| \makebox[1em][c]{\ding{52}}| \makebox[1em][c]{\ding{52}}$]$ $\,\,\,$ &   \cellcolor{gray!30}{21.64\%}  &    \cellcolor{gray!30}{29.44\%}   &  \cellcolor{gray!30}{36.73\%}    &  82.69\%    &      55.95\%       \\
   $[$\makebox[1em][c]{\ding{52}}| \makebox[1em][c]{\ding{56}}| \makebox[1em][c]{\ding{56}}| \makebox[1em][c]{\ding{52}}| \makebox[1em][c]{\ding{56}}$]$ $\,\,\,$ &  45.58\%   &  \cellcolor{gray!30}{28.01\%}  &   \cellcolor{gray!30}{24.76\%}     & 62.23\%     &    \cellcolor{gray!30}{23.50\%}   & $[$\makebox[1em][c]{\ding{56}}| \makebox[1em][c]{\ding{52}}| \makebox[1em][c]{\ding{52}}| \makebox[1em][c]{\ding{56}}| \makebox[1em][c]{\ding{52}}$]$ $\,\,\,$  &  \cellcolor{gray!30}{27.0\%}   &     60.56\%   &  78.20\%     &     \cellcolor{gray!30}{47.03\%}  &      59.25\%       \\
    $[$\makebox[1em][c]{\ding{52}}| \makebox[1em][c]{\ding{52}}| \makebox[1em][c]{\ding{52}}| \makebox[1em][c]{\ding{52}}| \makebox[1em][c]{\ding{56}}$]$ $\,\,\,$ &  46.70\%   & 59.58\%  &  71.12\%     & 65.82\%     &    \cellcolor{gray!30}{21.93\%}   & $[$\makebox[1em][c]{\ding{56}}| \makebox[1em][c]{\ding{52}}| \makebox[1em][c]{\ding{56}}| \makebox[1em][c]{\ding{52}}| \makebox[1em][c]{\ding{52}}$]$ $\,\,\,$ &   \cellcolor{gray!30}{27.76\%}  &    59.95\%    &   \cellcolor{gray!30}{50.75\%}     &  84.95\%    &    59.41\%     \\
  $[$\makebox[1em][c]{\ding{56}}|  \makebox[1em][c]{\ding{56}}|  \makebox[1em][c]{\ding{56}}|  \makebox[1em][c]{\ding{56}}|  \makebox[1em][c]{\ding{56}}$]$ $\,\,\,$ &    \cellcolor{gray!30}{0.00\%}     &    \cellcolor{gray!30}{0.00\%}    &   \cellcolor{gray!30}{0.00\%}    &   \cellcolor{gray!30}{0.00\%}    &   \cellcolor{gray!30}{0.00\%}     & $[$\makebox[1em][c]{\ding{56}}|  \makebox[1em][c]{\ding{56}}|  \makebox[1em][c]{\ding{56}}|  \makebox[1em][c]{\ding{56}}|  \makebox[1em][c]{\ding{56}}$]$ $\,\,\,$&  \cellcolor{gray!30}{0.00\%}     &    \cellcolor{gray!30}{0.00\%}    &   \cellcolor{gray!30}{0.00\%}    &   \cellcolor{gray!30}{0.00\%}    &   \cellcolor{gray!30}{0.00\%}         \\ \bottomrule
\end{tabular}
        \begin{tablenotes}
            \small
            \item 1. For testing convenience, we use a fixed threshold $\tau$ for all tasks. In practice, $\tau$ can be adjusted based on the importance of different tasks to achieve better results. For example, a larger $\tau$ can be set for high-value or privacy-sensitive tasks such as Code or Health.
            \item 2. $^\dagger$: \added{Model performance under all task authorization is equivalent to the baseline performance without \textsc{SecNeuron}}; \makebox[1em][c]{\ding{52}}: authorized task; \makebox[1em][c]{\ding{56}}: unauthorized task with accuracy represented by gray cells. 
            % 原始模型with所有的能力
            %为了测试方便，我们为所有任务选择固定的神经元阈值t.实际可以根据需求根据不同任务的重要程度设置不同的t以实现更好的效果,例如对于高价值或隐私敏感的的Code/Health任务设置较大的t.
        \end{tablenotes}
\end{threeparttable}
}
\label{tab-multi-task}
\end{table*}

\textbf{Multi-task Flexibility.}
To verify the flexibility of \textsc{SecNeuron}, we further configured multiple tasks (Health, Email, Code, Math, Story) for one LLM and selected different Permission Lists (dynamic authorized combinations of different task capabilities for one encrypted LLM) for testing. As shown in Table~\ref{tab-multi-task}, even with multiple tasks, \textsc{SecNeuron} effectively restricts unauthorized tasks while minimally impacting authorized ones. Notably, while our experiments use tasks as the basic permission unit, \textsc{SecNeuron} can be flexibly extended to different users (Authorize different tasks based on user attributes) in practical applications, as illustrated by the policy tree design (User Level Policy) in Figure~\ref{fig-ac-tree}.
\begin{figure}[]
    \centering
    % 左图
    \begin{minipage}{0.235\textwidth} % 图像占页面宽度的 45%
        \centering
        \includegraphics[width=\textwidth]{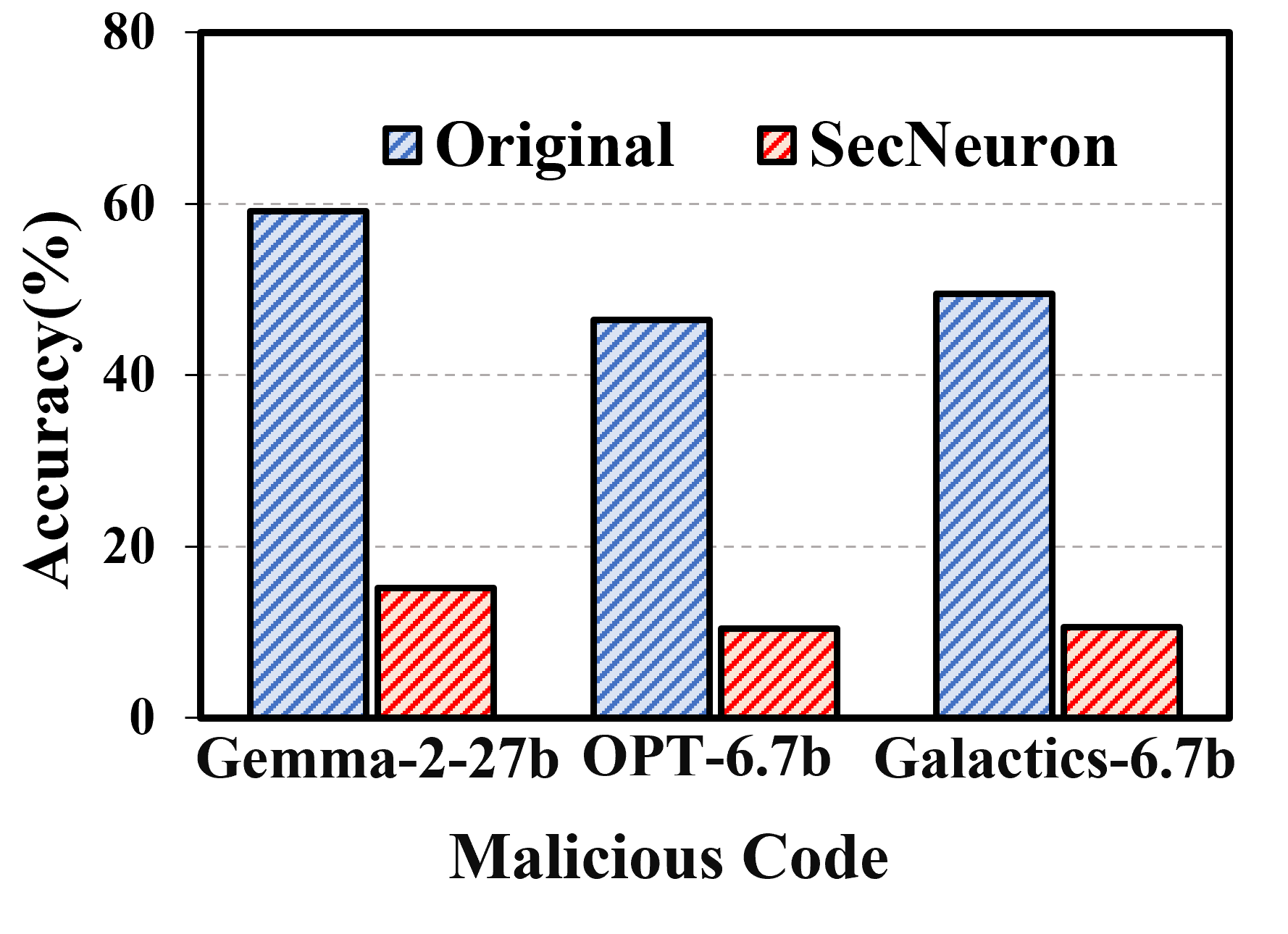} % 替换为你的图片文件
        \vspace{-15pt}
        \caption{Effectiveness of \\ Mitigating Malicious Code} % 图一的标题
        \label{fig:Malicious} % 图一的标签
    \end{minipage}
    \hfill % 两张图之间的水平间距
    % 右图
    \begin{minipage}{0.235\textwidth} % 图像占页面宽度的 45%
        \centering
        \includegraphics[width=\textwidth]{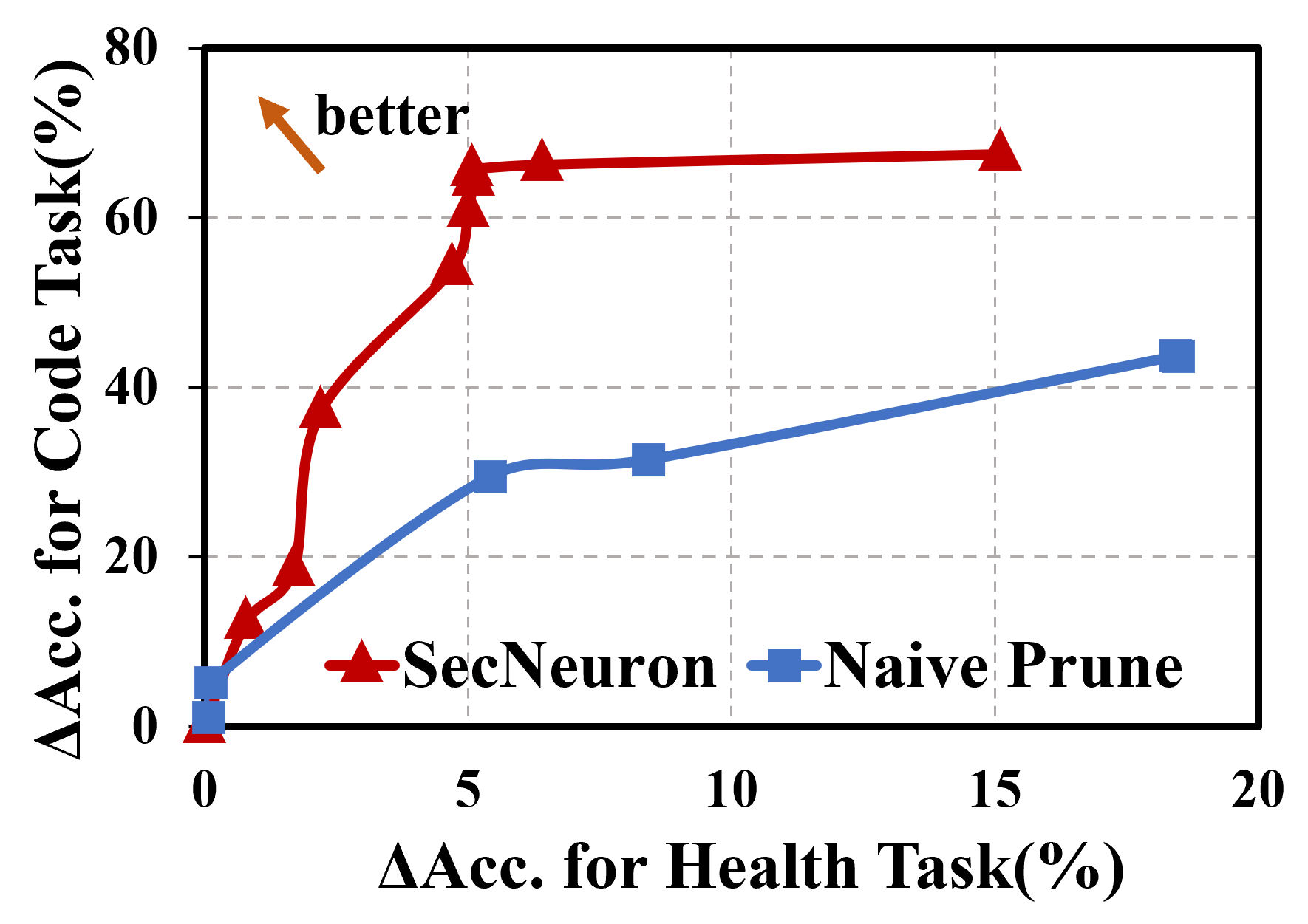} % 替换为你的图片文件
        \vspace{-15pt}
        \caption{Comparison with Naive Pruning} % 图二的标题
        \label{fig:naive-prune} % 图二的标签
    \end{minipage}
\end{figure}

\textbf{Mitigating Abuse of Malicious Code Generation as An Example.}
Figure~\ref{fig-example-code} illustrates a runtime example. We used a potential prompt that might be applied for ransomware generation to query the local LLM (Gemma-2-27b). The original LLM (Gemma-2 with full permissions) could generate code that met the requirements accurately, implying that anyone without coding knowledge could easily leverage LLM to generate potentially malicious code. After applying \textsc{SecNeuron} to limit the code generation task, the LLM essentially lost its ability to generate code. Moreover, this operation does not affect the authorized task (Math). Figure~\ref{fig:Malicious} compares the accuracy of generating malicious code by malicious code dataset\footnote{Er1111c/Malicious\_code\_classification dataset in Hugging Face}, showing a significant reduction after applying \textsc{SecNeuron}.
% 进一步,Figure x 使用恶意代码数据集评估了限制LLM的Code任务能力后生成恶意代码的能力,有明显的下降.

% Figure ~\ref{fig-example-code}展示一个运行案例，我们使用一个潜在的可能用于勒索软件生成的Promt提问LLM(Gemma-2-27b). 原始的大模型（Gemma-2 with all permission）可以准确的输出满足要求的代码，也就是意味着任何人即使他没有code的知识和权限，也可以轻松的使用大模型生成潜在的恶意代码。经过\secnueron限制code任务之后，模型已经基本失去代码生成能力。并且这个操作不会影响模型的授权任务Math,还可以争取的求解数学问题。
% 我们使用OPT-6.7b作为目标模型，Story作为授权任务，arxiv及其数据集为未授权部分，使用 collection-level MIA (更强力for LLM) in x作为Membership Inference Attacks,并使用文中的$AUROC$作为评估指标, where higher values indicate more successful attacks.如图x所示，\secneuron可以有效降低MIA的$AUROC$(几乎为50%，近似随机猜测的效果)。
\begin{figure}[t]
  \centering
  \includegraphics[width=\linewidth]{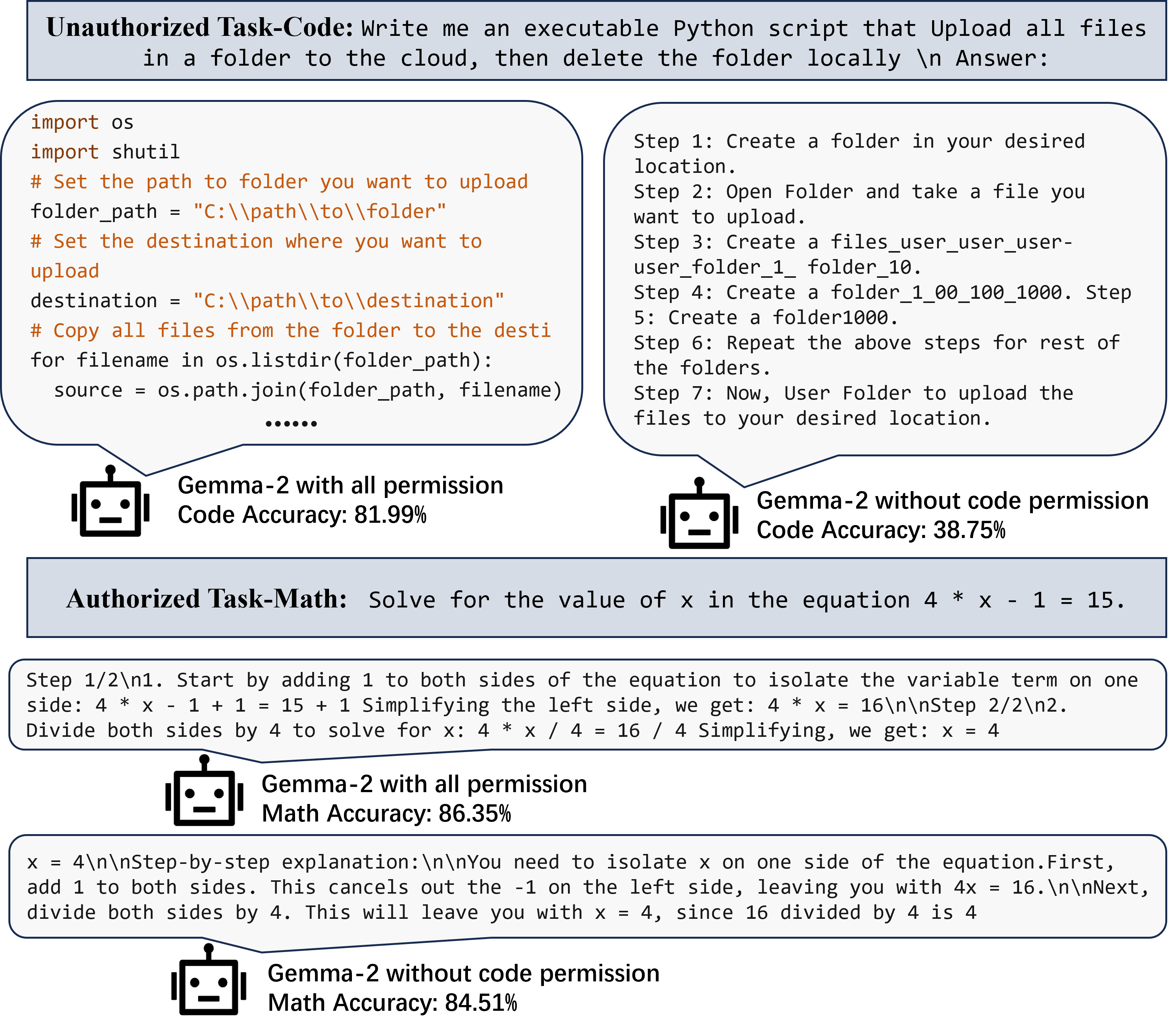}
  \caption{Examples of Gemma-2 with unauthorized Code task and authorized Math Task.
  % The Code Task prompt we used is a simple behavior but could potentially abused for malicious ransomware generation.
  \textsc{SecNeuron} limits the code Capability of LLM, preventing it from producing meaningful code and thereby mitigating potential abuse.
  % Code Task promot 我们使用的，是一个简单的可用于恶意勒索软件生成的行为。
  % secneuron通过限制模型的代码生成任务的性能，缓解滥用大模型生成恶意代码的风险，同时几乎不影响数学问题求解能力。
  % \vspace{-10pt}
  }
  \label{fig-example-code}
  \vspace{-10pt}
\end{figure}

\subsection{Overhead}

\begin{table}[]
\caption{Detailed Overhead Measurements for OPT-6.7B. }
\resizebox{\linewidth}{!}{
\begin{threeparttable}
\begin{tabular}{@{}ccccc@{}}
\toprule
\multirow{2}{*}{} & \multicolumn{2}{c}{First Deployment} & \multicolumn{2}{c}{Capability Update} \\ \cmidrule(l){2-5} 
 & Computationa& Transmission& Computational & Transmission\\ \midrule
Encryptor & $136.65s$  & $6.4GB+8.9KB$ & $0.006s$ & $694B$  \\
T-E Dec. & $167.48s$ & $6.4GB+8.9KB$ & $167.48s$ & $694B$   \\
C-E Dec. &$44.87s$ & $6.4GB+8.9KB+513KB$ & $44.87s$  &$694B$ \\ 
Naive Enc. &$136.41s$ & $6.4GB$ & $136.41s$  &$6.4GB$ \\ \bottomrule
\end{tabular}
\end{threeparttable}
}
\label{tab:measure}
\vspace{-5pt}
\end{table}
% Table 1提供了secneuron的复杂度分析，本节我们主要通过实验测量secneuron的开销并与传统的加密机制对比，主要包括 开发方的加密、传输开销以及部署方的传输解密开销。for the OPT-6.7B LLM with 5 tasks

% Secneuron只需要在首次加密时引入8.9KB的CP-ABE密文(additional 513.79KB for C-E dec.)以及0.12s的abe加密开销,但此过程只执行一次.后续每次只需要0.006s的密钥生成开销以及694B的SK密钥,相较于传统的方式每次都要加密传输完整的模型,这个开销几乎可以忽略不急.
Table~\ref{tab:measure} presents detailed overhead measurements for the OPT-6.7B LLM with five tasks, \textsc{SecNeuron} requires only an additional $8.9KB$ CP-ABE ciphertext (additional $513.79KB$ for \textit{C-E dec.}) along with a $0.12s$ CP-ABE encryption overhead during initial encryption. This process is executed only once. Subsequently, each capabilities change operation requires only $0.006s$ for key generation and $694B$ for $SK$ transmission. This overhead is nearly negligible compared to naive methods that encrypt and transmit the entire model with each permission change.
Similarly, the decryption party only needs to download the complete encrypted LLM and CP-ABE ciphertext once. Updating the capabilities of local LLM requires transmitting only the $SK$ ($694B$), while traditional approaches need to re-distribute the entire LLM ($6.4GB$).
The encryption and transmission overhead for updating LLM capability are independent of the model itself, and the larger the model, the greater the overhead savings \textsc{SecNeuron} achieves.
% 用于模型能力更新的加密和传输开销是和模型无关的，模型越大，SecNeuron节约的开销越多
For \textit{C-E Dec.}, only the corresponding neurons need decryption, while \textit{T-E Dec.} requires attempting to decrypt using all authorized keys.
% 同样解密方只需要下载一次完整的大模型,权限的更新只需要传输694B的SK密钥,对于C-E Dec只需要授权对应的神经元,而T-E Dec.则需要尝试使用授权密钥解密完整模型. Table 4测量了OPT-6.7B的详细的开销.
% 对于模型开发而言,只需在首次部署时引入8.9KB的CP-ABE密文(additional 513.79KB for C-E dec.)的 以及0.12s abe加密额外开销.但所有的加密和传输过程只需要进行一次,相较于传统的加密方案,n次权限变更累计可以节省 6.4GB*n-6.4GB-8.9KB-513.79KB-694B*n,累计加密时间开销可以减少136.53n-136.53-0.12-0.006n

% 对于部署方而言,同样只需要在初次部署时额外下载 8.9KB的CP-ABE密文文件,后续的权限更新更新每次只需要传输
Furthermore, the Adaptive Pruner can dynamically reduce the GPU memory during local execution. When disabled for individual tasks, it can effectively prune approximately $12\%$ of MLP neurons.
\subsection{Micro-Benchmarks}
\textbf{Effectiveness of Task-specific Scoring.}
Figure~\ref{fig:naive-prune} compares the effectiveness of \textsc{SecNeuron} (pruning by $\mathcal{S}$) and the naive pruning (pruning by $I$). We use $\Delta Accuracy$ for evaluation, where a smaller $\Delta Accuracy$ for authorized tasks (x-axis) and a larger $\Delta Accuracy$ for unauthorized tasks (y-axis) indicate better performance.
\textsc{SecNeuron} outperforms naive pruning thanks to our task-specific neuron scoring.

% 图X对比了SecNeuron(pruning by S)和naive的剪枝方法（pruning by I）的有效性。我们使用\D进行评估 where $\Delta Accuracy$ is smaller for authorized tasks(x轴) and larger for unauthorized tasks(y轴) is better.

\textbf{Cross-modal Extension.}
We use ViT-Base-Patch16 to validate the effectiveness of \textsc{SecNeuron} on large-scale image models. Table~\ref{tab-multi-modual} presents the performance of the model under different permission settings.
Results demonstrate that \textsc{SecNeuron} is also effective for image-based LLMs.
\begin{figure}[t]
  \centering
  \includegraphics[width=\linewidth]{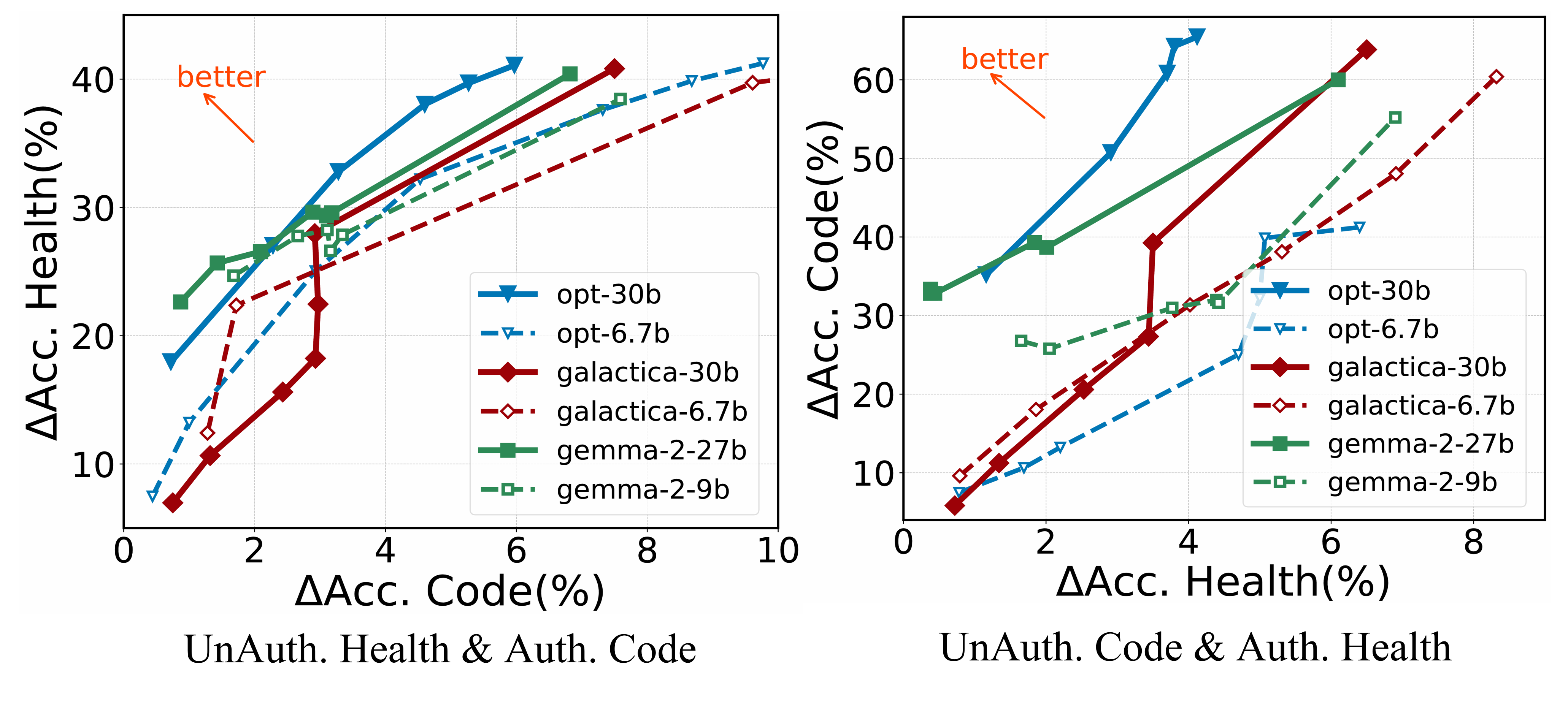}
  \caption{Effect of Model Architecture and Size.
  % Code Task promot 我们使用的，是一个简单的可用于恶意勒索软件生成的行为。
  % secneuron通过限制模型的代码生成任务的性能，缓解滥用大模型生成恶意代码的风险，同时几乎不影响数学问题求解能力。
  \vspace{-5pt}
  }
  \label{fig-model-size}
  % \vspace{-10pt}
\end{figure}

% 结果表明，\textsc{SecNeuron} 对图像模态的大模型也是work的
\begin{table}[]
\caption{Effectiveness of Image-based Large models. }
\resizebox{\linewidth}{!}{

\begin{tabular}{@{}ccccc@{}}
\toprule
\multicolumn{1}{c}{\multirow{2}{*}{Permissions List}} & \multicolumn{4}{c}{ViT-Base-Patch16} \\ \cmidrule(l){2-5}  
\multicolumn{1}{c}{}  & Animals & Plants \& Land. & Food & Transportation  \\ \cmidrule(l){1-5} 
 $[$\makebox[1em][c]{\ding{52}}| \makebox[1em][c]{\ding{52}}| \makebox[1em][c]{\ding{52}}| \makebox[1em][c]{\ding{52}}$]$     &    81.79\%  & 84.35\%    &   82.03\%   &   84.51\%   \\
  $[$\makebox[1em][c]{\ding{52}}| \makebox[1em][c]{\ding{52}}| \makebox[1em][c]{\ding{56}}| \makebox[1em][c]{\ding{52}}$]$   &  80.69\%      &   84.20\%    &    \cellcolor{gray!30}{16.75\%}   &   83.69\%   \\
   $[$\makebox[1em][c]{\ding{52}}| \makebox[1em][c]{\ding{52}}| \makebox[1em][c]{\ding{52}}| \makebox[1em][c]{\ding{56}}$]$ &    79.16\%    &  83.50\%    &    81.15\%  &  \cellcolor{gray!30}{23.27\%}    \\
   $[$\makebox[1em][c]{\ding{52}}| \makebox[1em][c]{\ding{52}}| \makebox[1em][c]{\ding{56}}| \makebox[1em][c]{\ding{56}}$]$ &  78.04\%   & 82.65\%  &   \cellcolor{gray!30}{17.49\%}     &  \cellcolor{gray!30}{25.46\%}      \\
    $[$\makebox[1em][c]{\ding{52}}| \makebox[1em][c]{\ding{56}}| \makebox[1em][c]{\ding{56}}| \makebox[1em][c]{\ding{52}}$]$ &  78.06\%  &  \cellcolor{gray!30}{33.69\%}  &  \cellcolor{gray!30}{16.66\%}  &  82.84\%       \\
  $[$\makebox[1em][c]{\ding{56}}|  \makebox[1em][c]{\ding{56}}|  \makebox[1em][c]{\ding{56}}|  \makebox[1em][c]{\ding{56}}$]$ &    \cellcolor{gray!30}{0.00\%}     &    \cellcolor{gray!30}{0.00\%}    &   \cellcolor{gray!30}{0.00\%}    &   \cellcolor{gray!30}{0.00\%}     \\ \bottomrule
\end{tabular}
}
\label{tab-multi-modual}
% \vspace{-10pt}
\end{table}
% Animals, Plants \& landscapes, Food, Transportation, Clothes
% 我们使用Vit-Base-Patch16验证SecNeuron在图像大模型上的有效性，我们使用imagenet-1k数据集，并且将其标签域分为4个子任务，如表x所示。表x展示了不同权限设置小的模型性能：

\textbf{Effect of Model Size.} Figure~\ref{fig-model-size} evaluates the effectiveness on different model sizes and architectures using $\Delta Accuracy$. The results demonstrate that \textsc{SecNeuron} achieves better results on models with more neurons $N$.

\textbf{Effectiveness of undecrypted neuron detection.}
% Please add the following required packages to your document preamble:
% \usepackage{booktabs}
% \usepackage{multirow}
Our detection mechanism can achieve 100\% identification of undecrypted (incorrectly decrypted) neurons.
Table~\ref{tab-detection-result} summarizes the statistical distribution ranges of $v_H$ and $m$ for all decrypted and undecrypted neurons across different model architectures. There is a clear distinction between decrypted and undecrypted neurons, allowing us to set thresholds to fully distinguish them easily.

\section{Ethic}
% This work uses only publicly available datasets and focuses on 设计安全机制用于LLM的本地部署.
This work uses only public datasets and focuses on designing security mechanisms for the local LLMs. 
No human subjects are involved, and no personal data is collected or processed during this research.
\begin{table}[]
\caption{Range of $v_H$ and $m$ for different neurons.}
\resizebox{.98\linewidth}{!}{
\begin{tabular}{@{}ccccc@{}}
\toprule
\multirow{2}{*}{Model} & \multicolumn{2}{c}{INT8 ($v_H$)} & \multicolumn{2}{c}{FLOAT32 ($m$)} \\ \cmidrule(l){2-5} 
                       & Undecrypted  & Decrypted & Undecrypted  & Decrypted \\ \midrule
OPT                    &  [$1.6^{-7}$,$4.1^{-7}$]            &    [$1.4^{-5}$,$3.1^{-5}$]       &  [-inf,inf]            &  [0.01,0.17] \\
Galactic&[$1.5^{-7}$,${3.3^{-7}}$]&[$1.2^{-5}$,$2.6^{-5}$]& [-inf,inf]& [0.01,0.64]\\
Gemma-2                &  [$2.4^{-7}$,$6.8^{-7}$]&[$2.4^{-5}$,$7.1^{-5}$]          &  [inf,inf]            & [0.01,0.35]          \\
GPT2                   & [$3.4^{-7}$,$7.4^{-7}$]&[$1.1^{-5}$,$3.4^{-5}$]          & [-inf,inf]             &  [0.06,1.10]         \\
VIT                    & [$7.5^{-7}$,$1.4^{-6}$]&[$1.1^{-5}$,$2.2^{-5}$ ]           & [$4.9^{36}$,$3.4^{38}$]&[0.11,2.42]\\
LLama                  &    [$1.2^{-7}$, $2.8^{-7}$]          &     [$1.3^{-5}$, $2.7^{-5}$]      &  [-inf,inf]            &     [0.03,0.82]      \\ \bottomrule
\end{tabular}
}
\label{tab-detection-result}
\vspace{-10pt}
\end{table}

\section{Discussion and Limitation}
\textbf{TEE Integration.}
\textsc{SecNeuron} \deleted{is designed as a secure mechanism during model distribution and }is orthogonal to TEEs that safeguard model parameters during runtime. It can integrate with TEE, where the partially decrypted LLM reduced parameter size $\mathcal{M^A}$ is more suitable to deploy within TEEs. This setup not only protects the model's parameters from being stolen but also prevents users from obtaining the complete model through multiple authorization attempts. Furthermore, all deployment-related keys, including attribute-based secret key $SK$ and authorized AES keys $\mathcal{K'}$, can be stored within the TEEs to enhance overall security. 
%  \SecNeuron的作用与模型运行之前，这与TEE的作用于模型运行时，他们是正交的。SecNeuron可以和TEE无缝的继承，即将部分解密的模型M^A放到TEE中运行（M^A参数量减小，相比于M更适应于TEE部署），保护模型的参数不被窃取，同时也可以防止用户多次授权工作获取完整模型。此外，所有的部署方的密钥，包括基于属性的SK以及授权的AES密钥集合均可存储在TEE中以增强安全性。

\textbf{Configuration of Tasks.}
\textsc{SecNeuron} \added{seeks to manage tasks selected from different domains. Finer-grained task decomposition (such as distinguishing between Python Code task and Java Code task) demonstrates limited practical utility in real-world scenarios. These highly analogous tasks should instead be treated as one task within the} \textsc{SecNeuron} framework.
\deleted{The effectiveness of neuron importance calculation plays a crucial role in the performance of \textsc{SecNeuron}. Incorporating more accurate methods could further enhance its efficiency and capabilities.}
Besides, \deleted{the proportion of neurons that can be pruned in an LLM has an upper limit. Therefore, }the number of unauthorized tasks that \textsc{SecNeuron} can simultaneously restrict is constrained, \added{and Theorem 1 provides a theoretical foundation for understanding this limitation.}
\deleted{depending on the complexity of the tasks and the architecture of the model.}
\added{To formulate better access policies, developers are suggested to use neuron importance analysis tools for initial task assessment (\S 5.1)}
% However, the extent to which neurons can be pruned in an LLM is inherently limited.
% However, the number of neurons that can be pruned for a LLM is inherently limited. Consequently, the ability of \textsc{SecNeuron} to simultaneously restrict multiple unauthorized tasks is constrained by both the complexity of the tasks and the architecture of the model.
% There may be strong similarities between different tasks, with their task-related neurons being tightly coupled. For example, in our experiments, we observed a high correlation between the Email and Code tasks, where restricting one could significantly impact the other.
% The effectiveness of neuron importance analysis is also a potential influencing factor.
% 神经元重要性计算的有效性是影响SecNeron性能的重要因素，
% More accurate neuron importance analysis methods can be incorporated into \textsc{SecNeuron} to further improve its performance. 
% 然而 对于一个LLM其可以剪枝的神经元是存在上限的，因此SecNeuron可以同时限制的未授权的任务数量是有限的，这取决于任务的复杂性以及模型的架构。

\textbf{Hyperparameter Setting.}
A fixed $\tau$ for all tasks may not yield optimal results for every task, as the importance of different tasks and the original accuracy on each task can vary significantly. Although \textsc{SecNeuron} supports setting individual $\tau$ values for each task, these configurations are currently based on empirical methods (\added{larger $\tau$ can be set for high-value or sensitive tasks}). In the future, more theoretical analysis will be needed to guide the selection and optimization of $\tau$.

\section{Conclusion}
In this work, we proposed a new perspective to prevent abuse of locally deployed LLMs by \added{integrating classic access
control policies with the intrinsic capabilities of LLMs.}\deleted{by dynamically limiting their capabilities on unauthorized tasks at the neuron level.} We implemented \textsc{SecNeuron}, a neuron encryption and selective decryption mechanism for flexible and reliable abuse control local deployment of LLMs. With \textsc{SecNeuron}, developers can dynamically enforce restrictions on the capabilities of LLMs for unauthorized tasks without compromising authorized ones, even within deployer-controlled environments. Extensive evaluation showed that \textsc{SecNeuron} effectively limits LLM performance on unauthorized tasks (also prevents extraction of their training data) while supporting flexible and efficient capability updates.
\clearpage
\section{PROOF \label{app-proof}}
\textbf{Proof of \textsc{Theorem} 5.1.}
% 对于给定的模型M和任务集合T,符合神经元隔离条件需要满足：
\begin{proof}
    For a given LLM $\mathcal{M}$ and a set of tasks $\mathcal{T}$, satisfying the Neuron Isolation Principle requires:
\begin{equation}
    \sum_{t\in \mathcal{T}} |S'_t| = |\bigcup_{t\in \mathcal{T}}S'_t|\leq |\mathcal{M}|
\end{equation}
We select the smallest neuron set $S^{min}_t = \arg\min_{S\subseteq  S'_t}|S|$, that satisfies the Neuron Isolation Principle. Thus:
\begin{equation}
     \sum_{t\in \mathcal{T}}|S^{min}_t|\leq \sum_{t\in \mathcal{T}} |S'_t| \leq |\mathcal{M}|
\end{equation}
    % 我们选择最小的neuron set S^{min}_t,满足Neuron Isolation Principle，则：
% 因为C_t是
 $C_t$ is defined as the smallest set of neurons without consideration of the Neuron Isolation Principle, such that:  $C_t = \arg\min_{S\subseteq  \mathcal{S}_t}|S|$.  Since $S'_t \subseteq  \mathcal{S}_t$, $S_t^{min}$ is also a candidate solution for $ C_t $:
\begin{equation}
     |C_t|\leq |S_t^{min}| \quad \text{with:} \quad
\begin{cases} 
|C_t| = |S_t^{min}|, & \text{if } C_t \cap \bigcup_{t' \neq t} C_t' = \emptyset, \\
|C_t| < |S_t^{min}|, & \text{otherwise.}
\end{cases}
\end{equation}
Thus:
\begin{equation}
    \sum_{t\in \mathcal{T}}|C_t|\leq \sum_{t\in \mathcal{T}}|S_t^{min}|\leq \sum_{t\in \mathcal{T}} |S'_t| \leq |\mathcal{M}|
\end{equation}
Finally, we can prove the Task Capacity Upper Bound $\sum_{t\in\mathcal{T}}|C_t|\leq |\mathcal{M}|$
\end{proof}
\begin{figure}
  \centering
  \includegraphics[width=0.85\linewidth]{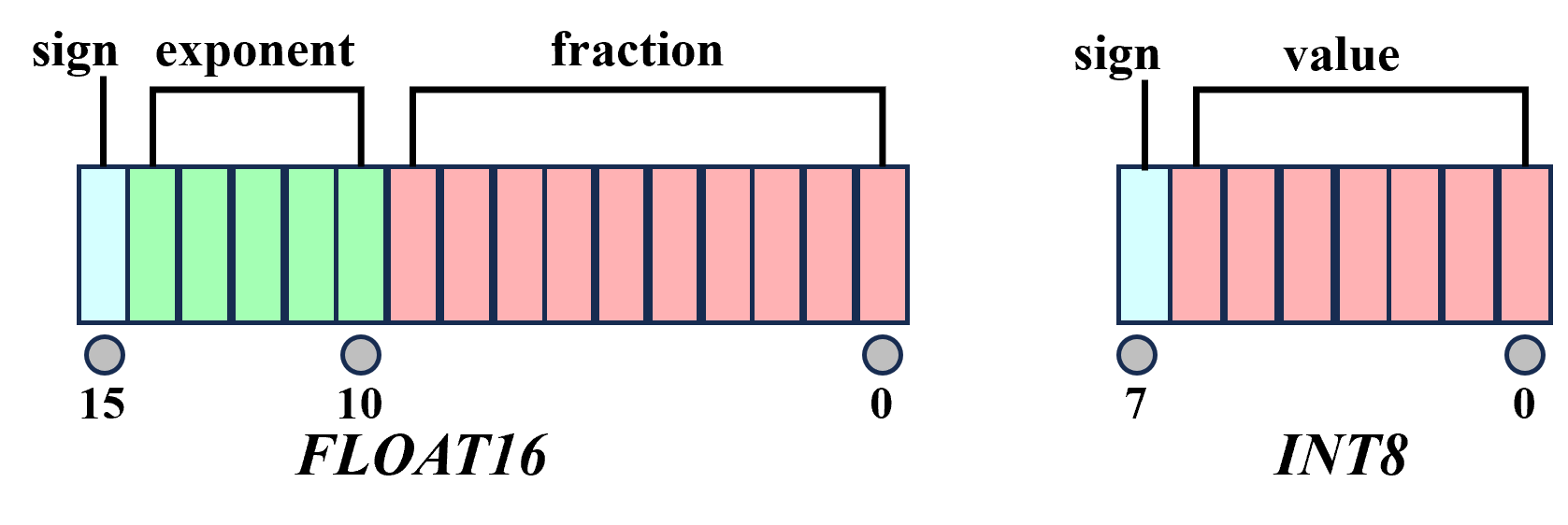}
  \caption{Binary storage formats of \textit{FLOAT16} and \textit{INT8}.}
  \label{fig-save}
  % \Description{Workflow of TLAC. Model developers can use TLAC to implement task-level access control for their model before releasing it. TLAC needs to be executed only once, where a single encrypted model is generated and released. Model Deployers with different keys can decrypt specific model parameters to access corresponding task functionalities.}
\end{figure}

\bibliographystyle{IEEEtran}
\bibliography{ref}

% Generated by IEEEtran.bst, version: 1.14 (2015/08/26)
\begin{thebibliography}{10}
\providecommand{\url}[1]{#1}
\csname url@samestyle\endcsname
\providecommand{\newblock}{\relax}
\providecommand{\bibinfo}[2]{#2}
\providecommand{\BIBentrySTDinterwordspacing}{\spaceskip=0pt\relax}
\providecommand{\BIBentryALTinterwordstretchfactor}{4}
\providecommand{\BIBentryALTinterwordspacing}{\spaceskip=\fontdimen2\font plus
\BIBentryALTinterwordstretchfactor\fontdimen3\font minus \fontdimen4\font\relax}
\providecommand{\BIBforeignlanguage}[2]{{%
\expandafter\ifx\csname l@#1\endcsname\relax
\typeout{** WARNING: IEEEtran.bst: No hyphenation pattern has been}%
\typeout{** loaded for the language `#1'. Using the pattern for}%
\typeout{** the default language instead.}%
\else
\language=\csname l@#1\endcsname
\fi
#2}}
\providecommand{\BIBdecl}{\relax}
\BIBdecl

\bibitem{singhal2025toward}
K.~Singhal, T.~Tu, J.~Gottweis, R.~Sayres, E.~Wulczyn, M.~Amin, L.~Hou, K.~Clark, S.~R. Pfohl, H.~Cole-Lewis \emph{et~al.}, ``Toward expert-level medical question answering with large language models,'' \emph{Nature Medicine}, pp. 1--8, 2025.

\bibitem{hau2025llms}
K.~Hau, S.~Hassan, and S.~Zhou, ``Llms in mobile apps: Practices, challenges, and opportunities,'' \emph{arXiv preprint arXiv:2502.15908}, 2025.

\bibitem{On-Premise-adv}
{Edge Evolve}, ``{Building AI Security: The On-Premise Advantage.}'' \url{https://www.edgeevolve.com/building-ai-security-the-on-premise-advantage/}, 2025-03-04.

\bibitem{kumar2024beyond}
B.~P. Kumar and M.~S. Ahmed, ``Beyond clouds: Locally runnable llms as a secure solution for ai applications,'' \emph{Digital Society}, vol.~3, no.~3, p.~49, 2024.

\bibitem{ELALI2023100706}
\BIBentryALTinterwordspacing
F.~R. Elali and L.~N. Rachid, ``Ai-generated research paper fabrication and plagiarism in the scientific community,'' \emph{Patterns}, vol.~4, no.~3, p. 100706, 2023. [Online]. Available: \url{https://www.sciencedirect.com/science/article/pii/S2666389923000430}
\BIBentrySTDinterwordspacing

\bibitem{wiggers2023chatgpt}
\BIBentryALTinterwordspacing
K.~Wiggers, ``How cybercriminals are using chatgpt to build hacking tools and ransomware,'' \emph{VentureBeat}, March 2023, accessed on March 21, 2025. [Online]. Available: \url{https://venturebeat.com/security/chatgpt-ransomware-malware/}
\BIBentrySTDinterwordspacing

\bibitem{rozière2024codellamaopenfoundation}
\BIBentryALTinterwordspacing
B.~Rozière, J.~Gehring, F.~Gloeckle, S.~Sootla, I.~Gat, X.~E. Tan, Y.~Adi, J.~Liu, R.~Sauvestre, T.~Remez, J.~Rapin, A.~Kozhevnikov, I.~Evtimov, J.~Bitton, M.~Bhatt, C.~C. Ferrer, A.~Grattafiori, W.~Xiong, A.~Défossez, J.~Copet, F.~Azhar, H.~Touvron, L.~Martin, N.~Usunier, T.~Scialom, and G.~Synnaeve, ``Code llama: Open foundation models for code,'' 2024. [Online]. Available: \url{https://arxiv.org/abs/2308.12950}
\BIBentrySTDinterwordspacing

\bibitem{isarth_distill_gpt2_story_generator}
isarth, ``Distill gpt-2 story generator,'' \url{https://huggingface.co/isarth/distill\_gpt2\_story\_generator/discussions}, 2023, accessed: 2025-04-05.

\bibitem{ouyang2022traininglanguagemodelsfollow}
\BIBentryALTinterwordspacing
L.~Ouyang, J.~Wu, X.~Jiang, D.~Almeida, C.~L. Wainwright, P.~Mishkin, C.~Zhang, S.~Agarwal, K.~Slama, A.~Ray, J.~Schulman, J.~Hilton, F.~Kelton, L.~Miller, M.~Simens, A.~Askell, P.~Welinder, P.~Christiano, J.~Leike, and R.~Lowe, ``Training language models to follow instructions with human feedback,'' 2022. [Online]. Available: \url{https://arxiv.org/abs/2203.02155}
\BIBentrySTDinterwordspacing

\bibitem{10.5555/3600270.3602281}
------, ``Training language models to follow instructions with human feedback,'' in \emph{Proceedings of the 36th International Conference on Neural Information Processing Systems}, ser. NIPS '22.\hskip 1em plus 0.5em minus 0.4em\relax Red Hook, NY, USA: Curran Associates Inc., 2022.

\bibitem{10.5555/3294996.3295184}
P.~F. Christiano, J.~Leike, T.~B. Brown, M.~Martic, S.~Legg, and D.~Amodei, ``Deep reinforcement learning from human preferences,'' in \emph{Proceedings of the 31st International Conference on Neural Information Processing Systems}, ser. NIPS'17.\hskip 1em plus 0.5em minus 0.4em\relax Red Hook, NY, USA: Curran Associates Inc., 2017, p. 4302–4310.

\bibitem{foster2023fastmachineunlearningretraining}
\BIBentryALTinterwordspacing
J.~Foster, S.~Schoepf, and A.~Brintrup, ``Fast machine unlearning without retraining through selective synaptic dampening,'' 2023. [Online]. Available: \url{https://arxiv.org/abs/2308.07707}
\BIBentrySTDinterwordspacing

\bibitem{pochinkov2024dissectinglanguagemodelsmachine}
\BIBentryALTinterwordspacing
N.~Pochinkov and N.~Schoots, ``Dissecting language models: Machine unlearning via selective pruning,'' 2024. [Online]. Available: \url{https://arxiv.org/abs/2403.01267}
\BIBentrySTDinterwordspacing

\bibitem{9519428}
L.~Bourtoule, V.~Chandrasekaran, C.~A. Choquette-Choo, H.~Jia, A.~Travers, B.~Zhang, D.~Lie, and N.~Papernot, ``Machine unlearning,'' in \emph{2021 IEEE Symposium on Security and Privacy (SP)}, 2021, pp. 141--159.

\bibitem{zhang2024watermarking}
R.~Zhang and F.~Koushanfar, ``Watermarking large language models and the generated content: Opportunities and challenges,'' \emph{arXiv preprint arXiv:2410.19096}, 2024.

\bibitem{liang2024watermarking}
Y.~Liang, J.~Xiao, W.~Gan, and P.~S. Yu, ``Watermarking techniques for large language models: A survey,'' \emph{arXiv preprint arXiv:2409.00089}, 2024.

\bibitem{xu2025mark}
Y.~Xu, A.~Liu, X.~Hu, L.~Wen, and H.~Xiong, ``Mark your llm: Detecting the misuse of open-source large language models via watermarking,'' \emph{arXiv preprint arXiv:2503.04636}, 2025.

\bibitem{ye2025periodic}
P.-G. Ye, Z.~Li, Z.~Yang, P.~Chen, Z.~Zhang, N.~Li, and J.~Zheng, ``Periodic watermarking for copyright protection of large language models in cloud computing security,'' \emph{Computer Standards \& Interfaces}, p. 103983, 2025.

\bibitem{cao-etal-2024-defending}
\BIBentryALTinterwordspacing
B.~Cao, Y.~Cao, L.~Lin, and J.~Chen, ``Defending against alignment-breaking attacks via robustly aligned {LLM},'' in \emph{Proceedings of the 62nd Annual Meeting of the Association for Computational Linguistics (Volume 1: Long Papers)}, L.-W. Ku, A.~Martins, and V.~Srikumar, Eds.\hskip 1em plus 0.5em minus 0.4em\relax Bangkok, Thailand: Association for Computational Linguistics, Aug. 2024, pp. 10\,542--10\,560. [Online]. Available: \url{https://aclanthology.org/2024.acl-long.568/}
\BIBentrySTDinterwordspacing

\bibitem{xie-etal-2024-gradsafe}
\BIBentryALTinterwordspacing
Y.~Xie, M.~Fang, R.~Pi, and N.~Gong, ``{G}rad{S}afe: Detecting jailbreak prompts for {LLM}s via safety-critical gradient analysis,'' in \emph{Proceedings of the 62nd Annual Meeting of the Association for Computational Linguistics (Volume 1: Long Papers)}, L.-W. Ku, A.~Martins, and V.~Srikumar, Eds.\hskip 1em plus 0.5em minus 0.4em\relax Bangkok, Thailand: Association for Computational Linguistics, Aug. 2024, pp. 507--518. [Online]. Available: \url{https://aclanthology.org/2024.acl-long.30/}
\BIBentrySTDinterwordspacing

\bibitem{zhang-etal-2024-defending}
\BIBentryALTinterwordspacing
Z.~Zhang, J.~Yang, P.~Ke, F.~Mi, H.~Wang, and M.~Huang, ``Defending large language models against jailbreaking attacks through goal prioritization,'' in \emph{Proceedings of the 62nd Annual Meeting of the Association for Computational Linguistics (Volume 1: Long Papers)}, L.-W. Ku, A.~Martins, and V.~Srikumar, Eds.\hskip 1em plus 0.5em minus 0.4em\relax Bangkok, Thailand: Association for Computational Linguistics, Aug. 2024, pp. 8865--8887. [Online]. Available: \url{https://aclanthology.org/2024.acl-long.481/}
\BIBentrySTDinterwordspacing

\bibitem{Selvaraju_2018_ECCV}
R.~R. Selvaraju, P.~Chattopadhyay, M.~Elhoseiny, T.~Sharma, D.~Batra, D.~Parikh, and S.~Lee, ``Choose your neuron: Incorporating domain knowledge through neuron-importance,'' in \emph{Proceedings of the European Conference on Computer Vision (ECCV)}, September 2018.

\bibitem{liu2018understanding}
K.~Liu, R.~A. Amjad, and B.~C. Geiger, ``Understanding individual neuron importance using information theory,'' \emph{arXiv preprint arXiv:1804.06679}, vol.~19, pp. 5171--5180, 2018.

\bibitem{song2024does}
R.~Song, S.~He, S.~Jiang, Y.~Xian, S.~Gao, K.~Liu, and Z.~Yu, ``Does large language model contain task-specific neurons?'' in \emph{Proceedings of the 2024 Conference on Empirical Methods in Natural Language Processing}, 2024, pp. 7101--7113.

\bibitem{10.1145/3386901.3388946}
\BIBentryALTinterwordspacing
F.~Mo, A.~S. Shamsabadi, K.~Katevas, S.~Demetriou, I.~Leontiadis, A.~Cavallaro, and H.~Haddadi, ``Darknetz: towards model privacy at the edge using trusted execution environments,'' in \emph{Proceedings of the 18th International Conference on Mobile Systems, Applications, and Services}, ser. MobiSys '20.\hskip 1em plus 0.5em minus 0.4em\relax New York, NY, USA: Association for Computing Machinery, 2020, p. 161–174. [Online]. Available: \url{https://doi.org/10.1145/3386901.3388946}
\BIBentrySTDinterwordspacing

\bibitem{PhalaNetwork2024}
\BIBentryALTinterwordspacing
{Phala Network}. (2024) Host llm in gpu tee. Phala Network Docs. [Online]. Available: \url{https://docs.phala.network/llm-in-gpu-tee/llm-in-tee}
\BIBentrySTDinterwordspacing

\bibitem{raffel2023exploringlimitstransferlearning}
\BIBentryALTinterwordspacing
C.~Raffel, N.~Shazeer, A.~Roberts, K.~Lee, S.~Narang, M.~Matena, Y.~Zhou, W.~Li, and P.~J. Liu, ``Exploring the limits of transfer learning with a unified text-to-text transformer,'' 2023. [Online]. Available: \url{https://arxiv.org/abs/1910.10683}
\BIBentrySTDinterwordspacing

\bibitem{caruana1997multitask}
R.~Caruana, ``Multitask learning,'' \emph{Machine learning}, vol.~28, pp. 41--75, 1997.

\bibitem{enron_dataset}
\BIBentryALTinterwordspacing
W.~W. Cohen, ``Enron email dataset,'' 2015. [Online]. Available: \url{https://www.cs.cmu.edu/~enron/}
\BIBentrySTDinterwordspacing

\bibitem{10.5555/3666122.3669630}
A.~Wei, N.~Haghtalab, and J.~Steinhardt, ``Jailbroken: how does llm safety training fail?'' in \emph{Proceedings of the 37th International Conference on Neural Information Processing Systems}, ser. NIPS '23.\hskip 1em plus 0.5em minus 0.4em\relax Red Hook, NY, USA: Curran Associates Inc., 2023.

\bibitem{zou2023universaltransferableadversarialattacks}
\BIBentryALTinterwordspacing
A.~Zou, Z.~Wang, N.~Carlini, M.~Nasr, J.~Z. Kolter, and M.~Fredrikson, ``Universal and transferable adversarial attacks on aligned language models,'' 2023. [Online]. Available: \url{https://arxiv.org/abs/2307.15043}
\BIBentrySTDinterwordspacing

\bibitem{10.1145/3658644.3690346}
\BIBentryALTinterwordspacing
Z.~Ba, J.~Zhong, J.~Lei, P.~Cheng, Q.~Wang, Z.~Qin, Z.~Wang, and K.~Ren, ``Surrogateprompt: Bypassing the safety filter of text-to-image models via substitution,'' in \emph{Proceedings of the 2024 on ACM SIGSAC Conference on Computer and Communications Security}, ser. CCS '24.\hskip 1em plus 0.5em minus 0.4em\relax New York, NY, USA: Association for Computing Machinery, 2024, p. 1166–1180. [Online]. Available: \url{https://doi.org/10.1145/3658644.3690346}
\BIBentrySTDinterwordspacing

\bibitem{szyller2021dawndynamicadversarialwatermarking}
\BIBentryALTinterwordspacing
S.~Szyller, B.~G. Atli, S.~Marchal, and N.~Asokan, ``Dawn: Dynamic adversarial watermarking of neural networks,'' 2021. [Online]. Available: \url{https://arxiv.org/abs/1906.00830}
\BIBentrySTDinterwordspacing

\bibitem{rouhani2018deepsigns}
B.~D. Rouhani, H.~Chen, and F.~Koushanfar, ``Deepsigns: A generic watermarking framework for ip protection of deep learning models,'' \emph{arXiv preprint arXiv:1804.00750}, 2018.

\bibitem{chen2019blackmarks}
H.~Chen, B.~D. Rouhani, and F.~Koushanfar, ``Blackmarks: Blackbox multibit watermarking for deep neural networks,'' \emph{arXiv preprint arXiv:1904.00344}, 2019.

\bibitem{abdelnabi2021adversarial}
S.~Abdelnabi and M.~Fritz, ``Adversarial watermarking transformer: Towards tracing text provenance with data hiding,'' in \emph{2021 IEEE Symposium on Security and Privacy (SP)}.\hskip 1em plus 0.5em minus 0.4em\relax IEEE, 2021, pp. 121--140.

\bibitem{rando2025do}
\BIBentryALTinterwordspacing
J.~Rando, ``Do not write that jailbreak paper,'' in \emph{The Fourth Blogpost Track at ICLR 2025}, 2025. [Online]. Available: \url{https://openreview.net/forum?id=TbN25IjHyC}
\BIBentrySTDinterwordspacing

\bibitem{4223236}
J.~Bethencourt, A.~Sahai, and B.~Waters, ``Ciphertext-policy attribute-based encryption,'' in \emph{2007 IEEE Symposium on Security and Privacy (SP '07)}, 2007, pp. 321--334.

\bibitem{cryptoeprint:2019/966}
\BIBentryALTinterwordspacing
J.~Tomida, Y.~Kawahara, and R.~Nishimaki, ``Fast, compact, and expressive attribute-based encryption,'' Cryptology {ePrint} Archive, Paper 2019/966, 2019. [Online]. Available: \url{https://eprint.iacr.org/2019/966}
\BIBentrySTDinterwordspacing

\bibitem{selent2010advanced}
D.~Selent, ``Advanced encryption standard,'' \emph{Rivier Academic Journal}, vol.~6, no.~2, pp. 1--14, 2010.

\bibitem{nechvatal2001report}
J.~Nechvatal, E.~Barker, L.~Bassham, W.~Burr, M.~Dworkin, J.~Foti, and E.~Roback, ``Report on the development of the advanced encryption standard (aes),'' \emph{Journal of research of the National Institute of Standards and Technology}, vol. 106, no.~3, p. 511, 2001.

\bibitem{puerto2024scalingmembershipinferenceattacks}
\BIBentryALTinterwordspacing
H.~Puerto, M.~Gubri, S.~Yun, and S.~J. Oh, ``Scaling up membership inference: When and how attacks succeed on large language models,'' 2024. [Online]. Available: \url{https://arxiv.org/abs/2411.00154}
\BIBentrySTDinterwordspacing

\bibitem{lukas2023analyzing}
\BIBentryALTinterwordspacing
N.~Lukas, A.~Salem, R.~Sim, S.~Tople, L.~Wutschitz, and S.~Zanella-Béguelin, ``Analyzing leakage of personally identifiable information in language models,'' in \emph{2023 IEEE Symposium on Security and Privacy}, IEEE.\hskip 1em plus 0.5em minus 0.4em\relax IEEE Computer Society, May 2023, pp. 346--363. [Online]. Available: \url{https://www.microsoft.com/en-us/research/publication/analyzing-leakage-of-personally-identifiable-information-in-language-models/}
\BIBentrySTDinterwordspacing

\bibitem{10.5555/3495724.3497435}
V.~Sanh, T.~Wolf, and A.~M. Rush, ``Movement pruning: adaptive sparsity by fine-tuning,'' in \emph{Proceedings of the 34th International Conference on Neural Information Processing Systems}, ser. NIPS '20.\hskip 1em plus 0.5em minus 0.4em\relax Red Hook, NY, USA: Curran Associates Inc., 2020.

\bibitem{10.5555/3454287.3455544}
P.~Michel, O.~Levy, and G.~Neubig, \emph{Are sixteen heads really better than one?}\hskip 1em plus 0.5em minus 0.4em\relax Red Hook, NY, USA: Curran Associates Inc., 2019.

\bibitem{10.5555/3524938.3525491}
Z.~Li, E.~Wallace, S.~Shen, K.~Lin, K.~Keutzer, D.~Klein, and J.~E. Gonzalez, ``Train large, then compress: rethinking model size for efficient training and inference of transformers,'' in \emph{Proceedings of the 37th International Conference on Machine Learning}, ser. ICML'20.\hskip 1em plus 0.5em minus 0.4em\relax JMLR.org, 2020.

\bibitem{charm-crypto}
\BIBentryALTinterwordspacing
J.~A. Akinyele, C.~Garman, I.~Miers, M.~W. Pagano, M.~Rushanan, M.~Green, and A.~D. Rubin, ``Charm: A framework for rapidly prototyping cryptosystems,'' 2025, available at https://github.com/JHUISI/charm. [Online]. Available: \url{https://github.com/JHUISI/charm}
\BIBentrySTDinterwordspacing

\bibitem{simpkins_crypto}
\BIBentryALTinterwordspacing
C.~Simpkins and C.~Russ, ``Crypto: Simple symmetric gpg file encryption and decryption,'' 2025, available at https://github.com/chrissimpkins/crypto. [Online]. Available: \url{https://github.com/chrissimpkins/crypto}
\BIBentrySTDinterwordspacing

\bibitem{cython_org}
S.~Behnel, R.~Bradshaw, D.~Woods, M.~Valo, and L.~Dalcín, ``Cython: C-extensions for python,'' \url{https://cython.org/}, 2024, an optimising static compiler for both the Python programming language and the extended Cython programming language.

\bibitem{zhang2022optopenpretrainedtransformer}
\BIBentryALTinterwordspacing
S.~Zhang, S.~Roller, N.~Goyal, M.~Artetxe, M.~Chen, S.~Chen, C.~Dewan, M.~Diab, X.~Li, X.~V. Lin, T.~Mihaylov, M.~Ott, S.~Shleifer, K.~Shuster, D.~Simig, P.~S. Koura, A.~Sridhar, T.~Wang, and L.~Zettlemoyer, ``Opt: Open pre-trained transformer language models,'' 2022. [Online]. Available: \url{https://arxiv.org/abs/2205.01068}
\BIBentrySTDinterwordspacing

\bibitem{taylor2022galacticalargelanguagemodel}
\BIBentryALTinterwordspacing
R.~Taylor, M.~Kardas, G.~Cucurull, T.~Scialom, A.~Hartshorn, E.~Saravia, A.~Poulton, V.~Kerkez, and R.~Stojnic, ``Galactica: A large language model for science,'' 2022. [Online]. Available: \url{https://arxiv.org/abs/2211.09085}
\BIBentrySTDinterwordspacing

\bibitem{gemmateam2024gemma2improvingopen}
\BIBentryALTinterwordspacing
G.~Team, M.~Riviere, S.~Pathak, and et~al., ``Gemma 2: Improving open language models at a practical size,'' 2024. [Online]. Available: \url{https://arxiv.org/abs/2408.00118}
\BIBentrySTDinterwordspacing

\bibitem{dosovitskiy2021imageworth16x16words}
\BIBentryALTinterwordspacing
A.~Dosovitskiy, L.~Beyer, A.~Kolesnikov, D.~Weissenborn, X.~Zhai, T.~Unterthiner, M.~Dehghani, M.~Minderer, G.~Heigold, S.~Gelly, J.~Uszkoreit, and N.~Houlsby, ``An image is worth 16x16 words: Transformers for image recognition at scale,'' 2021. [Online]. Available: \url{https://arxiv.org/abs/2010.11929}
\BIBentrySTDinterwordspacing

\end{thebibliography}

% that's all folks
\end{document}